\def\vect#1{\mbox{\boldmath{$#1$}}}
\def\Ac{{\cal A}}
\def\Bc{{\cal B}}
\def\vx{\vec{\vect x}}
\def\vy{\vec{\vect y}}
\def\vz{{ \vec{\vect z}}}
\def\mC{\mathbb{C}}
\def\Cc{{\cal C}}
\newcommand{\cC}{\mathcal{C}}
\newcommand{\cA}{{\mathcal{A}}}
\def\Dc{{\cal D}}
\def\wG{G}
\newcommand{\bx}{\vect \rho}
\newcommand{\bz}{\vect z}
\def\wg{g}
\newcommand{\argmin}{\mathop{\mbox{argmin}}}
\newcommand{\eps}{\varepsilon}
\renewcommand{\epsilon}{\eps}
\renewcommand{\leq}{\leqslant}
\renewcommand{\geq}{\geqslant}
\def\bfrho{\mbox{\boldmath$\rho$}}
\def\bfb{\mbox{\boldmath$b$}}
\newcommand{\supp}{\mathrm{supp}}
\newtheorem{lemma}{Lemma}
\newtheorem{remark}{Remark}
\newtheorem{proposition}{Proposition}
\newtheorem{definition}{Definition}
\begin{document}

\title{Imaging with  highly incomplete and corrupted data}

\author{Miguel Moscoso, Alexei Novikov, George Papanicolaou, Chrysoula Tsogka}

\maketitle
\begin{abstract}
We consider  the problem of imaging sparse scenes from a few noisy data using an $\ell_1$-minimization approach. This problem can be cast as 
a linear system of the form $\Ac \, \bfrho =\bfb$, where $\Ac$ is an $N\times K$ measurement matrix. We assume that the dimension of the unknown
sparse vector $\bfrho \in \mC^K$ is much larger than the dimension of the data vector $\bfb \in \mC^N$, i.e, $K \gg N$.
We provide a theoretical framework that allows us to examine under what conditions the $\ell_1$-minimization problem admits a solution 
that is close to the exact one in the presence of noise.  Our analysis shows that $\ell_1$-minimization is not robust for imaging with noisy data when high
resolution is required. 
To improve the performance of  $\ell_1$-minimization we propose to solve instead 
the augmented linear system $ [\Ac \, | \,  \Cc] \bfrho =\bfb$, where the $N \times \Sigma$ matrix $\Cc$ is a noise collector. It is constructed so as its column vectors provide a frame 
on which the noise of the data, a vector of dimension $N$, can be well approximated.  Theoretically, the dimension $\Sigma$ of the noise collector should be $e^N$ 
which would make its use not practical. However, our numerical results illustrate that 
robust results in the presence of 
noise can be obtained with a large enough number of columns $\Sigma \approx 10 K$.    
\end{abstract}
\vspace{2pc}
\noindent{\it Keywords}: 
array imaging, $\ell_1$-minimization, noise

\maketitle
\section{Introduction}
In this paper, we are interested in imaging problems formulated as 
\begin{equation}\label{family0intro}
\Ac \, \bfrho=\bfb\, ,
\end{equation}
so the data vector $\bfb \in\mC^N$ is a linear transformation of the  unknown vector $\bfrho \in\mC^K$ that represents the image. 
The model matrix $\Ac\in \mC^{N\times K}$, which  is given to us, depends
on the geometry of the imaging system and on the sought resolution.
Typically, the linear system (\ref{family0intro}) is underdetermined because only a few linear measurements are gathered,
so $N \ll K$. Hence, there exist infinitely many  solutions to (\ref{family0intro}) and, thus, it is a priori not possible to find the 
correct one  without some additional information.

We are interested, however, in imaging problems with sparse scenes. We seek to locate the positions and amplitudes of a small number $M$ of point sources that illuminates a linear array of detectors. This means
that the unknown vector $\bfrho$ is M-sparse, with only a few $M \ll K$ non-zero entries. Under this assumption, (\ref{family0intro}) falls under 
the compressive sensing framework \cite{Gorodnitsky97,Donoho03,Gribonval03,Candes06a}. It follows from \cite{Donoho03} 
that the unique M-sparse solution of  (\ref{family0intro}) can be obtained with $\ell_1$-norm minimization when the model matrix 
$\Ac$ is  incoherent, i.e., when its mutual coherence\footnote{The mutual coherence of $\Ac$ is defined as $\max_{i \ne j}  
|\langle \vect a_i,\vect a_j \rangle |$, where the column vectors $\vect a_i \in \mC^N$ of $\Ac$ are normalized to one, so that $\|\vect a_i \|_{\ell_2} =1$ $\forall \, i=1,\ldots,K$.}  
is smaller than $1/(2M)$. The same result can be obtained assuming $\Ac$ obeys the M-restricted isometry property  
\cite{Candes06a}, which basically states that all sets of M-columns of $\Ac$ behave approximately as an orthonormal system.  

In our imaging problems these incoherence conditions 
can be satisfied only for coarse image discretizations that imply poor resolution.  
To retain resolution and recover the position of the sources with higher precision we 
propose to extend the theory so as to allow for some coherence in $\Ac$. To this end, we show that uniqueness for the minimal $\ell_1$-norm solution of (\ref{family0intro}) can 
be obtained under less restrictive conditions on the model matrix $\Ac$. More specifically, given  
the columns of $\Ac$  that correspond to the support of $\bfrho$,  we define their vicinities as the sets of columns that are almost parallel
\footnote{The vicinity of a column $\vect a_i$ is defined as the set of all columns $\vect a_j$ such that 
$|\langle \vect a_i,\vect a_j \rangle | \ge 1/(3M)$.} 
to them. With this definition, our first result  set out in Proposition \ref{Old_l1} states that if the sources are  located far enough 
from each other, so that their vicinities do not overlap,
we can recover their positions exactly with noise free data. 
Furthermore, in the presence of small noise, their position is still approximately recoverable, in the sense that most of the solution vector
is supported in the vicinities while some small noise (grass) is present away from them.

This result finds interesting applications in imaging. As we explain in Section~\ref{sec:array}, in array imaging we seek to find the position of point sources that are represented as the non-zero entries of  $\bfrho$. Our result states under what conditions the location of these objects can 
be determined with high precision. It can be also used to explain super-resolution, i.e.,   
the significantly superior resolution that $\ell_1$-norm minimization provides compared to the conventional resolution 
of the imaging system, i.e., the Rayleigh resolution. 
For instance, super-resolution have been studied using {sparsity} promotion for sparse spike trains  recovery from band-limited measurements. Donoho \cite{Donoho92} showed that spike locations and their weights can be exactly recovered for a cutoff frequency $f_c$ if the minimum spacing $\Delta$ between spikes is large enough, so $\Delta>1/f_c$. Cand\`{e}s and Fernandez-Granda \cite{Candes14} showed that $\ell_1$-norm minimization guarantees the exact 
recovery if $\Delta>1/2f_c$.
Super-resolution has also been studied for highly coherent model matrices $\Ac$ that arise in imaging under the assumption  of well separated objects  when the resolution is below the Rayleigh threshold  \cite{Fannjiang12b, Borcea15, Borcea18}. These works 
include results regarding the robustness of super-resolution in the presence of noise.

Our theory also addresses the robustness to noise of the minimal $\ell_1$-norm solution. Specifically, we show that for noisy data the solution $\bfrho$ can 
be separated into two parts: (1) the coherent part which is supported inside the vicinities, and (2) the incoherent part, usually referred 
to as grass, that is small and it is present everywhere.  A key observation of our work is that the $\ell_1$-images get worse as $\sqrt{N}$ when there is noise in the data and, thus, 
$\ell_1$-norm  minimization fails when the number of measurements $N$ is large. This basically follows from (\ref{def:gamma}) in Proposition \ref{Old_l1} which 
relates the $\ell_1$ norm of the solution to the $\ell_2$ norm of the data, so 
 $$\| \vect \rho \|_{\ell_1} \leq  \gamma \,  \| \vect b \|_{\ell_2} .$$
The key quantity here is the constant $\gamma$, which for usual imaging matrices $\cA$ is proportional to $\sqrt{N}$. 

To overcome this problem we introduce in Proposition \ref{noise_c} the noise collector matrix $\cC \in \mC^{N\times \Sigma}$ and propose to solve instead the augmented linear system 
$[\Ac \, | \,  \Cc]\bfrho=\bfb$. The dimension of the unknown vector $\bfrho$ is, thus, augmented by $\Sigma$ components which do not have any physical meaning. They correspond to a fictitious sources that 
allows us to better approximate the noisy data.  The natural question is how to build the noise collector matrix. Theoretically, the answer is given in the proof of 
Proposition \ref{noise_c}  in Section~\ref{sec:la},
which is constructive. The key is that the column vectors of $[\Ac \, | \,  \Cc]$ form now a frame in which the noisy vector $\bfb$ can be well approximated. As a consequence, we obtain a bound on the 
constant $\gamma$ ($\gamma < 18 M^2$) which is now independent of $N$. The drawback of this construction is that we need exponentially many vectors, that is $\Sigma \lesssim e^N$. This would 
suggest that the noise collector may not be practical.  However, the numerical experiments show that with a large enough number of columns in $\cC$ selected at random 
(as i.i.d. Gaussian random variables with mean zero and variance $1/N$) the $\ell_1$-norm minimization problem is regularized and the minimal $\ell_1$-norm solution is found.

The paper is organized as follows. In Section \ref{sec:array}, we formulate the array 
 imaging problem.  In Section \ref{sec:la}, we present in a abstract linear algebra framework the conditions under 
 which $\ell_1$-minimization provides the exact solution to problem (\ref{family0intro}) with and without noise.  This section contains our main results.
  In Section \ref{sec:vicinities},  we illustrate with numerical simulations 
how our abstract theoretical results are relevant in imaging sparse sources with noisy data. Section \ref{sec:conclusions} contains our conclusions.

\section{Passive array imaging}
   \label{sec:array}
    \begin{figure}[t]
    \begin{center}
    \input{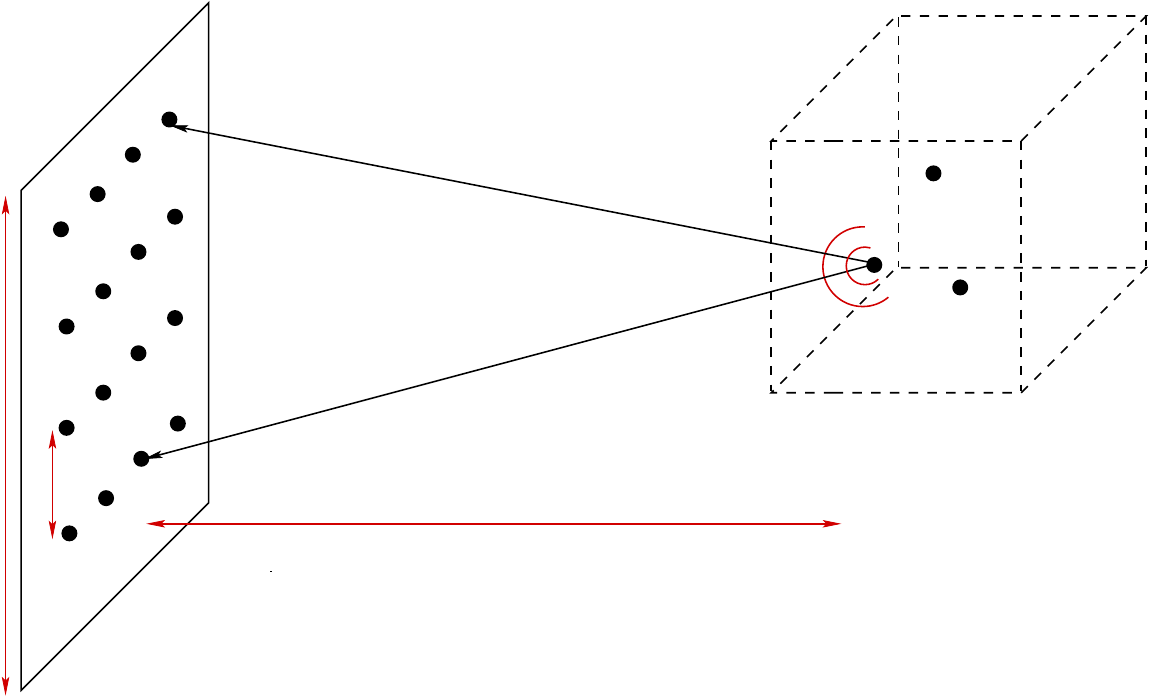_t}
        \end{center}
        \caption{General setup of a passive array imaging problem. The source at $\vx_s$  emits a signal and the response is recorded at all array elements $\vx_r$, $r=1,\ldots,N$. The sources located at ${\vy}_{j}$, $j=1,\dots,M$ 
    are at distance $L$ from the array and inside the image window IW. }
    \label{cs_3d_illustration}
    \end{figure}

We consider point sources located inside a region of interest called the image window IW. 
The goal of array imaging is to determine their positions and amplitudes using measurements obtained on an array of receivers.
The array of size $a$ has $N$ receivers separated by a distance $h$  
located at positions $\vx_r$, $r = 1,\ldots,N$ (see Figure \ref{cs_3d_illustration}).  
They can measure single or multifrequency signals with frequencies $\omega_l$, $l=1,\dots,S$. 
The $M$ point sources, whose positions $\vz_j$ 
and complex-valued amplitudes $\alpha_j\in\mC$, $j=1,\dots,M$, we seek to determine, are at a distance $L$ from the array. The ambient medium between the array and the sources can be homogeneous or inhomogeneous. 

In order to form the images we discretize the IW using a uniform grid of points $\vy_k$, $k=1,\ldots,K$, and we introduce the {\it  true 
source vector}
$$\vect \rho=[\rho_{1},\ldots,\rho_{K}]^{\intercal}\in\mC^K\, ,$$ 
such that
\[
\rho_{k} = \left\{ 
\begin{array}{ll}
\alpha_j, & \hbox{ if } \| \vz_j - \vy_{k} \|_\infty <   \hbox{ grid-size, for some } j=1,\ldots,M,\\
0, & \hbox{ otherwise.}
\end{array}
\right. 
\]
We will not assume that the sources
lie on the grid, i.e., typically $\vz_j  \neq \vy_k$ for all $j$ and $k$. 
To write the data received on the array in a compact form, we define the Green's function vector   
\begin{equation}\label{eq:GreenFuncVec}
\vect \wg(\vy;\omega)=[\wG(\vx_{1},\vy;\omega), \wG(\vx_{2},\vy;\omega),\ldots,
\wG(\vx_{N},\vy;\omega)]^{\intercal}\,
\end{equation}
at location $\vy$ in the IW, where $\wG(\vx,\vy;\omega)$ denotes the free-space Green's function of
the homogeneous medium. This function characterizes the propagation of a 
signal of angular frequency $\omega$ from point $\vy$ to point $\vx$, so 
(\ref{eq:GreenFuncVec}) 
represents the signal received at the array due to a point source of amplitude one, phase zero, and frequency $\omega$
at $\vy$. If the medium is homogeneous
\begin{equation}\label{greenfunc}
\wG(\vx,\vy;\omega)=\frac{\exp \left( i \frac{\omega |\vx-\vy|}{c_0}  \right)}{4\pi|\vx-\vy|} .
\end{equation}
The signal received at $\vx_r$ at  frequency $\omega_l$  is given by 
\begin{equation}
\label{response}
b(\vx_r, \omega_l)=\sum_{j=1}^M\alpha_j  \wG (\vx_r,\vz_j; \omega_l).
\end{equation}
If we  normalize the columns of $\Ac$ to one and stack the data in a column vector 
\begin{equation}
\vect b = \frac{1}{\sqrt{N S}} [ b(\vx_1, \omega_1), b(\vx_2, \omega_1),\dots,b(\vx_N, \omega_S)]^{\intercal}\,,
\end{equation}
then the source vector $\bfrho$ solves the system  
$\Ac\,\bfrho = \bfb$,
with the $(N\cdot S)\times K$ matrix
\begin{equation}\label{eq:matrixSfreq}
\hspace*{-1.5cm}  \Ac
 = \frac{1}{\sqrt{N S}}
\left( 
\begin{array}{cccc}
    \uparrow & \uparrow&        & \uparrow \\
    \vect\wg(\vy_1; \omega_1)   & 
    \vect\wg(\vy_2; \omega_1)    & 
    \ldots & 
     \vect\wg(\vy_K; \omega_1)    \\
\downarrow & \downarrow&        & \downarrow \\
    \uparrow & \uparrow&        & \uparrow \\
    \vect\wg(\vy_1; \omega_2)   & 
    \vect\wg(\vy_2; \omega_2)    & 
    \ldots & 
  \vect\wg(\vy_K; \omega_2)    \\
\downarrow & \downarrow&        & \downarrow \\
\vdots & \vdots&        & \vdots \\
    \uparrow & \uparrow&        & \uparrow \\
\vect\wg(\vy_1; \omega_{S})   & 
\vect\wg(\vy_2; \omega_{S})    & 
    \ldots & 
  \vect\wg(\vy_K; \omega_{S})    \\
\downarrow & \downarrow&        & \downarrow
  \end{array}
\right) 
: = 
\left(
\begin{array}{cccc}
    \uparrow & \uparrow&        & \uparrow \\
    \vect a_1   & 
    \vect a_2& 
    .. & 
    \vect a_K    \\
\downarrow & \downarrow&        & \downarrow
  \end{array}
\right)  .
 \end{equation}
The system $\Ac\,\bfrho = \bfb$ relates the unknown vector  $\bfrho\in\mC^K$ to the data vector $\bfb \in\mC^{(N\cdot S)}$.
This system of linear equations  can be solved by appropriate $\ell_2$
and $\ell_1$ methods. 

\begin{remark}
For simplicity of the presentation, we restricted ourselves to the passive array imaging problem where we seek to determine a distribution of sources. The active array imaging problem can be cast under the same linear algebra framework assuming  the linearized Born approximation for scattering \cite{CMP13}. In that case, we still obtain a system of the form $\Ac_s \,\bfrho = \bfb$, where $\bfrho$ is the reflectivity of the scatterers, $\bfb$ is the data, and $\Ac_s$ is a model matrix for the scattering problem defined in a similar manner to (\ref{eq:matrixSfreq}). Even more, when multiple scattering
is not negligible the problem can also be cast as in (\ref{family0intro}); see \cite{CMP14} for details. Therefore, the theory presented in the next sections can be applied to the scattering problems provided that the matrix $\Ac_s$ satisfies the assumptions of Propositions \ref{Old_l1} and \ref{noise_c}. 
\end{remark}

\section{$\ell_1$ minimization-based methods}\label{sec:la}

In the imaging problems considered here we assume that the sources occupy only a small fraction of the image window IW. This means that the true source vector $\bfrho$ is sparse, so the 
number of its entries that are different than zero, denoted by $M$, is much smaller than its length $K$. Thus, we assume $M=|\supp(\bfrho)|\ll K$. This prior knowledge changes the imaging problem substantially because
 we can exploit the sparsity of $\bfrho$ by formulating it 
 as an optimization problem which seeks the sparsest vector in $\mC^K$ that equates model and data. 
 Thus, for a single measurement vector $\vect b$ we solve
\begin{equation}\label{l1normsol}
\bfrho_{\ell_1}=\argmin \|\vect \rho\|_{\ell_1}, \hbox{ subject to } \Ac \vect \rho= \vect b.
\end{equation}
Above, and in the sequel, we denote by $\| \cdot \|_{\ell_2}$, $\| \cdot \|_{\ell_1}$  and  $\| \cdot \|_{\ell_\infty}$ the $\ell_2$, $\ell_1$  and  $\ell_\infty$ norms of a vector, respectively. 

 The $\ell_1$ minimization problem~(\ref{l1normsol}) can be solved efficiently in practice. 
Several methods
have been proposed in the literature to solve it.
Here are some of
them: orthogonal matching pursuit~\cite{Bruckstein09}, 
homotopy~\cite{Tibshirani94,Osborne00, Efron04}, interior-point methods~\cite{Alizadeh95,Wright05},
gradient
projection~\cite{Figueiredo07}, sub-gradient descent methods in primal and dual spaces~\cite{Nedic01,Borwein11},
and proximal gradient in combination with
iterative shrinkage-thresholding~\cite{Nesterov83, Nesterov07,
Beck09}.  In this work we use GeLMA~\cite{Moscoso12},  a semi-implicit version 
of the primal-dual method~\cite{CP} that converges to the solution of the following problem independently of the regularization
parameter $\tau$: define the function
\begin{equation}\label{min-max}
F(\bx, \bz) = \tau \| \bx \|_{\ell_1} + \frac{1}{2} \| \cA  \bx  - \bfb \|^2_{\ell_2} + \langle \bz, \bfb - \cA \bx \rangle
\end{equation}
for $\bx \in \cC^K$ and $\bz \in \cC^N$,
and determine the solution as
$$  \max_{\bz} \min_{\bx} F(\bx,\bz) .$$

In the literature of compressive sensing we find the following theoretical justification of the $\ell_1$-norm minimization approach.
If we assume decoherence of the columns  of $\Ac$, so
\begin{equation}\label{ortho}
|\langle \vect a_i, \vect a_j\rangle |  <  \frac{1}{2 M},\quad \forall i \neq j, 
\end{equation}
 then the $M$-sparse solution of
$\Ac \bx= \vect b$ is unique, and it
can be found as the solution of~(\ref{l1normsol})~\cite{Gorodnitsky97,Donoho03,Gribonval03}.
Numerically, the $\ell_1$-norm minimization approach works under less restrictive conditions than  the decoherence  condition~(\ref{ortho}) suggests.
In fact, our imaging matrices almost never satisfy (\ref{ortho}). 

Consider  a typical imaging regime with central wavelength $\lambda_0$.   
Assume we use $S=36$  equally spaced frequencies covering a bandwidth that is $10\%$ of the central frequency. The size of the array is $a$ 
and the distance between the array and the IW is $L=a$. 
An IW of size is $30 \lambda_0 \times 30 \lambda_0$  is discretized
using a uniform grid with mesh size $\lambda_0/2 \times \lambda_0/2$. For such parameters, {\it every} column  vector $ \vect a_i$ has 
 at least {\it sixty two} other column vectors $\vect a_j$ so that   $|\langle \vect a_i, \vect a_j\rangle | \geq 1/16$. 
Thus, our matrices are fairly far from satisfying the decoherence  condition~(\ref{ortho}) if we want to recover, say, 8 sources. Numerically, however, the $\ell_1$ minimization
works flawlessly. 

Physically, a pair of columns  $\vect a_i$ and $\vect a_j$ are  {\em coherent}, so $|\langle \vect a_i, \vect a_j \rangle| \approx 1$, if the corresponding grid-points in the image are close to each other. In other words, when  $ \vect a_i$ lies in a vicinity of  $\vect a_j$ (and vice versa). We assume, though, that the sources are far apart and, thus, the  
the set of columns indexed by the support of the true source vector $\bfrho$ does satisfy the the decoherence condition (\ref{ortho}). The above observation motivates the following natural conjecture. Perhaps, the $\ell_1$ minimization works well because it suffices to satisfy (\ref{ortho}) only on the support of $\bfrho$. Our main result supports this conjecture.

\subsection{Main results}
 When data is perturbed by small noise, the following qualitative description of the image could be observed. Firstly, 
 some pixels close to the points where the sources are located become visible. 
Secondly, a few pixels away from the sources are also visible. The latter is usually referred as grass. In order to quantify the observed results we 
need to modify the  the decoherence condition~(\ref{ortho}) and introduce the vicinities.

\begin{definition}\label{DeCo}
Let $\vect \rho \in \mathbb{C}^K$ be an $M$-sparse solution of
$\Ac \vect \rho=\vect b$, with support $T=\{i: \rho_i\neq 0\}$  \footnote{Below and in the rest of the paper the notation $\rho_{i}$ means the $i$th entry of the vector $\vect \rho$. In contrast,  we use the notation $\vect \rho_i$ 
to represent the $i$th  vector of a set of vectors.}.  For any $j \in T$ define the corresponding vicinity of $\vect a_j$ as 
\begin{equation}\label{vicinity}
S_j=\left\{ k ~ :  ~ |\langle \vect a_k, \vect a_j \rangle | \geq \frac{1}{3 M}\right\}.
\end{equation} For any vector  $\vect \eta \in \mathbb{C}^K$ its coherent misfit to $\vect \bfrho$ is
\begin{equation}\label{co_}
{\bf Co}(\vect \rho, \vect \eta)= \sum_{j \in T} \left| \rho_j - \sum_{k \in S_j} \langle \vect a_j, \vect a_k \rangle \eta_k \right|, 
\end{equation}
whereas its incoherent remainder with respect to $\vect \bfrho$ is
\begin{equation}\label{in_}
{\bf In}(\vect \rho, \vect \eta)=  \sum_{k \not\in \Upsilon} |\eta_k |, ~ \,\, ~ \Upsilon=  \cup_{j \in T} S_j. 
\end{equation}
\end{definition}

\begin{proposition} \label{Old_l1}
Let $\vect \rho$ be an $M$-sparse solution of
$\Ac \vect \rho=\vect b$, and let $T$ be its support. 
Suppose  the vicinities  $S_j$ from Definition~\ref{DeCo}  do not overlap,
and let $\gamma > 0$ be defined as
\begin{equation}
 \gamma = \sup_{\vect c} \frac{ \|  \vect \xi \|_{\ell_1}}{\| \vect c \|_{\ell_2}},  \mbox{ where }   \vect \xi  \mbox{ is the minimal } \ell_1-\mbox{norm} \,\, \mbox{solution of }\cA  \, \vect \xi =\vect c. 
\label{def:gamma}
\end{equation}
Let $\vect \rho_\delta$ be the minimal $\ell_1$-norm  solution of the noisy problem 
\begin{equation} \label{sys_noise}
\min  \|\vect \rho_\delta \|_{\ell_1}, \hbox{ subject to } \Ac \vect \rho_\delta =  \vect b_\delta,
\end{equation}
with $ \| \vect b - \vect b_\delta  \|_{\ell_2} \leq \delta$. Then,
\begin{equation}\label{est_co-}
{\bf Co}(\vect \rho, \vect \rho_\delta) \leq  3 \gamma \delta,
\end{equation}
and
\begin{equation}\label{est_in-}
{\bf In}(\vect \rho, \vect \rho_\delta) \leq  5 \gamma  \delta.
\end{equation}
If  $\delta=0$, and $\Upsilon$  does not contain collinear vectors, we have exact recovery: $\vect \rho_\delta= \vect \rho.$ 
\end{proposition}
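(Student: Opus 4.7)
The plan is to introduce $\eta=\vect\rho_\delta-\vect\rho$ and bootstrap between two conceptually different inequalities: an $\ell_1$-cone inequality coming from the minimality of $\vect\rho_\delta$, and a coherence inequality coming from the vicinity structure. First I would note that $\cA\eta=\vect b_\delta-\vect b$ satisfies $\|\cA\eta\|_{\ell_2}\le\delta$, so by the definition of $\gamma$ there exists a minimizer $\vect\xi^{*}$ of $\min\|\vect\xi\|_{\ell_1}$ subject to $\cA\vect\xi=\cA\eta$ with $\|\vect\xi^{*}\|_{\ell_1}\le\gamma\delta$. Since $\eta$ is itself feasible for this auxiliary program, one also gets the free bound $\|\vect\xi^{*}\|_{\ell_1}\le\|\eta\|_{\ell_1}$. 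Because $\vect\rho+\vect\xi^{*}$ is feasible for the noisy program (\ref{sys_noise}), the $\ell_1$-minimality of $\vect\rho_\delta$ gives $\|\vect\rho+\eta\|_{\ell_1}\le\|\vect\rho+\vect\xi^{*}\|_{\ell_1}\le\|\vect\rho\|_{\ell_1}+\gamma\delta$. Choosing the sign vector equal to $\mathrm{sgn}(\rho_k)$ on $T$ and $\mathrm{sgn}(\eta_k)$ on $T^{c}$ and applying the triangle inequality in the usual way yields the cone inequality $\|\eta_{T^{c}}\|_{\ell_1}\le\|\eta_{T}\|_{\ell_1}+\gamma\delta$.

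Next I would exploit non-overlap of the vicinities to rewrite the coherent misfit entirely in terms of $\eta$: since $k\in S_j$ and $k\in T$ force $k=j$, one has $\sum_{k\in S_j}\langle\vect a_j,\vect a_k\rangle\rho_k=\rho_j$, so
\begin{equation*}
{\bf Co}(\vect\rho,\vect\rho_\delta)=\sum_{j\in T}\Bigl|\sum_{k\in S_j}\langle\vect a_j,\vect a_k\rangle\eta_k\Bigr|.
\end{equation*}
Decompose $\eta=\vect\xi^{*}+\vect\nu$ with $\cA\vect\nu=0$. For the $\vect\xi^{*}$-part, $|\langle\vect a_j,\vect a_k\rangle|\le1$ gives $\bigl|\sum_{k\in S_j}\langle\vect a_j,\vect a_k\rangle\xi^{*}_k\bigr|\le\|\vect\xi^{*}_{S_j}\|_{\ell_1}$, and summing over disjoint $S_j$ yields $\le\|\vect\xi^{*}\|_{\ell_1}\le\gamma\delta$. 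For the $\vect\nu$-part, $\cA\vect\nu=0$ allows the substitution $\sum_{k\in S_j}\langle\vect a_j,\vect a_k\rangle\nu_k=-\sum_{k\notin S_j}\langle\vect a_j,\vect a_k\rangle\nu_k$, whose absolute value is bounded by $\tfrac{1}{3M}\|\vect\nu_{S_j^{c}}\|_{\ell_1}$ by the vicinity definition; summing over the $M$ indices of $T$ gives at most $\tfrac{1}{3}\|\vect\nu\|_{\ell_1}$. Thus ${\bf Co}(\vect\rho,\vect\rho_\delta)\le\gamma\delta+\tfrac{1}{3}\|\vect\nu\|_{\ell_1}$.

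To close the argument I would bound $\|\vect\nu\|_{\ell_1}\le\|\eta\|_{\ell_1}+\|\vect\xi^{*}\|_{\ell_1}\le\|\eta\|_{\ell_1}+\gamma\delta$ and combine the cone inequality $\|\eta\|_{\ell_1}\le2\|\eta_T\|_{\ell_1}+\gamma\delta$ with the reverse coherence estimate $\|\eta_T\|_{\ell_1}\le{\bf Co}(\vect\rho,\vect\rho_\delta)+\|\eta_{\Upsilon\setminus T}\|_{\ell_1}$, which follows by isolating the $k=j$ term in $\sum_{k\in S_j}\langle\vect a_j,\vect a_k\rangle\eta_k$ and using $|\langle\vect a_j,\vect a_k\rangle|\le1$. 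This produces a finite linear system in the three quantities ${\bf Co}$, $\|\eta_T\|_{\ell_1}$ and $\|\eta_{T^{c}}\|_{\ell_1}$ whose solution delivers (\ref{est_co-}), and the incoherent bound (\ref{est_in-}) then follows by subtracting the reverse coherence inequality from the cone one to obtain ${\bf In}(\vect\rho,\vect\rho_\delta)=\|\eta_{\Upsilon^{c}}\|_{\ell_1}\le{\bf Co}(\vect\rho,\vect\rho_\delta)+\gamma\delta$. The main obstacle is precisely this balancing step: neither the $\ell_1$-cone nor the coherence estimate alone controls $\|\eta_T\|_{\ell_1}$ absolutely, so one has to thread them together carefully to extract the explicit constants $3$ and $5$.

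For the noiseless case, $\delta=0$ forces ${\bf Co}(\vect\rho,\vect\rho_\delta)={\bf In}(\vect\rho,\vect\rho_\delta)=0$, so $\vect\rho_\delta$ is supported in $\Upsilon$ and $\rho_j=\sum_{k\in S_j}\langle\vect a_j,\vect a_k\rangle(\rho_\delta)_k$ for every $j\in T$. Taking absolute values and summing gives $\|\vect\rho\|_{\ell_1}\le\sum_{j\in T}\|\vect\rho_\delta^{(j)}\|_{\ell_1}=\|\vect\rho_\delta\|_{\ell_1}$, which together with the $\ell_1$-minimality forces equality and, via the triangle-equality condition, requires $|\langle\vect a_j,\vect a_k\rangle|=1$ whenever $(\vect\rho_\delta)_k\ne0$ with $k\in S_j$. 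The absence of collinear vectors in $\Upsilon$ then kills every off-diagonal contribution, so $\vect\rho_\delta$ is supported on $T$; since $|\langle\vect a_i,\vect a_j\rangle|<\tfrac{1}{3M}<\tfrac{1}{2M}$ for distinct $i,j\in T$ (because the vicinities are disjoint), the columns of $\cA_T$ are linearly independent, forcing $\vect\rho_\delta=\vect\rho$.
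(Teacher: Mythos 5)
Your setup is fine as far as it goes: the cone inequality from the $\ell_1$-minimality of $\vect\rho_\delta$, the decomposition $\vect\eta=\vect\xi^{*}+\vect\nu$ with $\cA\vect\nu=0$, and the identity ${\bf Co}(\vect\rho,\vect\rho_\delta)=\sum_{j\in T}\bigl|\sum_{k\in S_j}\langle\vect a_j,\vect a_k\rangle\eta_k\bigr|$ are all correct. The gap is in the final ``balancing step,'' and it is not a matter of bookkeeping. Your coherence estimate is ${\bf Co}\leq\gamma\delta+\tfrac13\|\vect\nu\|_{\ell_1}$, and $\|\vect\nu\|_{\ell_1}$ is only controlled through $\|\vect\eta\|_{\ell_1}=\|\vect\eta_T\|_{\ell_1}+\|\vect\eta_{T^c}\|_{\ell_1}$. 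None of your inequalities bounds these quantities absolutely: the cone inequality only compares $\|\vect\eta_{T^c}\|_{\ell_1}$ to $\|\vect\eta_T\|_{\ell_1}$, and the reverse coherence estimate $\|\vect\eta_T\|_{\ell_1}\leq{\bf Co}+\|\vect\eta_{\Upsilon\setminus T}\|_{\ell_1}$ carries the uncontrolled term $\|\vect\eta_{\Upsilon\setminus T}\|_{\ell_1}$. Writing $t=\|\vect\eta_T\|_{\ell_1}$ and $u=\|\vect\eta_{T^c}\|_{\ell_1}$, your system is $u\leq t+\gamma\delta$, $t\leq{\bf Co}+u$, ${\bf Co}\leq\tfrac43\gamma\delta+\tfrac13(t+u)$, and even with $\delta=0$ it admits $t=u=s$, ${\bf Co}=\tfrac23 s$ for arbitrarily large $s$; so no combination of these inequalities yields ${\bf Co}\leq3\gamma\delta$. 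This is not an artifact of weak estimates: in the regime the proposition addresses, $\vect\rho_\delta$ genuinely can move an $O(1)$ amount of mass from $j\in T$ to a neighbouring $k\in S_j$ with $\langle\vect a_j,\vect a_k\rangle\approx1$, making $\|\vect\eta\|_{\ell_1}$ and $\|\vect\nu\|_{\ell_1}$ of order $\|\vect\rho\|_{\ell_1}$ while ${\bf Co}$ stays tiny. Any bound of the form ${\bf Co}\leq\gamma\delta+c\,\|\vect\nu\|_{\ell_1}$ is therefore true but useless. The same gap propagates to your noiseless case, which takes ${\bf Co}={\bf In}=0$ as already established.

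The paper closes the loop by never trying to control the error mass inside the vicinities. It first produces an intermediate vector $\vect\xi$ solving the \emph{noiseless} system $\cA\vect\xi=\vect b$ exactly, with $\|\vect\xi\|_{\ell_1}\leq\|\vect\rho\|_{\ell_1}+2\gamma\delta$ and $\|\vect\xi-\vect\rho_\delta\|_{\ell_1}\leq\gamma\delta$ (Lemma~\ref{l1l1}), and then proves the structural inequality ${\bf Co}(\vect\rho,\vect\xi)\leq\tfrac12{\bf In}(\vect\rho,\vect\xi)$ (Lemma~\ref{l1l1_a}), whose right-hand side involves only the mass \emph{outside} $\Upsilon$. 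Feeding this into $\|\vect\rho\|_{\ell_1}\leq{\bf Co}(\vect\rho,\vect\xi)+\sum_{j\in T}|\xi_j|$ converts the $2\gamma\delta$ excess of $\|\vect\xi\|_{\ell_1}$ over $\|\vect\rho\|_{\ell_1}$ directly into ${\bf In}(\vect\rho,\vect\xi)\leq4\gamma\delta$, and the triangle inequality with $\|\vect\xi-\vect\rho_\delta\|_{\ell_1}\leq\gamma\delta$ finishes. To salvage your route you would need to sharpen the coherence step so that only $\|\vect\eta_{\Upsilon^c}\|_{\ell_1}$, and not $\|\vect\nu\|_{\ell_1}$, appears on the right-hand side; that is precisely what the paper's computation with the certificate $\cA_T(\cA_T^{*}\cA_T)^{-1}\vect\mu$ and the eigenvalue bound of Lemma~\ref{on_e_v} is designed to achieve.
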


Proposition~\ref{Old_l1} is proved in \ref{uno}. As it follows from this proof, 
our pessimistic bound $1/(3M)$ 
could be sharpened to the usual bound (\ref{ortho})  found in the literature. We did not strive to 
obtain sharper results because it will make the proofs more technical and, more importantly,  because the concept of vicinities describes well the observed phenomena in imaging with this bound. 

When there is no noise so $\delta = 0$, Proposition~\ref{Old_l1}  tells us that the M-sparse solution of $\Ac \vect \rho =\vect b$ 
can be recovered exactly by solving the $\ell_1$ minimization problem under a less stringent condition than (\ref{ortho}). 
Note that we allow for the columns of $\Ac$ to be close to collinear. When there is  noise so $\delta \neq 0$, this Proposition
shows that if the data $\vect b$ is not exact but it is known up to some bounded vector,  the solution 
$\vect \rho_\delta$ of the minimization problem (\ref{sys_noise}) is close to the solution of the original (noiseless) problem in the following sense. The 
solution $\vect \rho_\delta$  can be separated into two parts: the coherent part supported in the vicinities $S_j$ of the true solution, $j \in T$, 
and the incoherent part, which is small for low noise, and 
that is supported away from these vicinities. 
Other stability results can be found in \cite{Candes06a,Candes06b,DET06,TROPP06-1,Fannjiang12b,Borcea15}.
  
Let us now make some comments regarding the relevance of this result in imaging.  
Vicinities, as defined in (\ref{vicinity}), are related to the classical $\ell_2$-norm resolution theory. Indeed, recall Kirchhoff 
migration imaging given by the $\ell_2$-norm solution
\begin{equation} \label{sol_ell2}
\vect \rho_{\ell_2} = \Ac^* \vect \bfb,
\end{equation}
where $\Ac^*$ is the conjugate transpose of $\Ac$.  Since $\bfb = \Ac \bfrho$ the resolution analysis of KM relies on studying the behaviour of the inner products
$ |\langle \vect a_i, \vect a_k \rangle|$. We know from classical resolution analysis \cite{Borcea03} that the inner 
products $ |\langle \vect a_i, \vect a_k \rangle| $ are large for points $\vect y_k$ that fall inside the support of the  KM point spread 
function, whose size is  $\lambda L/a$ in cross-range (parallel to the array) and $c/B$ in range  (perpendicular to the array). 
Hence, we expect the size of the vicinities to be proportional to these classical resolution limits, with an appropriate scaling factor that is inversely 
proportional to the sparsity $M$. This is illustrated with numerical simulations in Section \ref{sec:vicinities} (right column of Figure  \ref{fig:th1}). 

Under this perspective, one could argue that Proposition~\ref{Old_l1} tells us the well known result that a good reconstruction can be obtained for well separated sources. 
Proposition~\ref{Old_l1}, however, gives us more information, it provides an $\ell_1$-norm resolution theory for imaging: when vicinities do not overlap, there is a single non-zero element of the source associated within each vicinity. Permitting the columns of $\Ac$ to be almost collinear inside the vicinities allows for a fine discretization inside the vicinities and therefore the 
source can be recovered with very high precision. Furthermore, recovery is {\em exact} for noiseless data. 

The assumptions in Proposition~\ref{Old_l1} are sufficient conditions but not necessary. Our numerical simulations illustrate {\em exact recovery}
in more challenging situations, where the vicinities are not well separated (central images in the second row of Figure \ref{fig:th1}).  

For noisy data,  Proposition~\ref{Old_l1} says that it is the concept of vicinities that provide the adequate framework to look at the error between the true solution and 
the one provided by the $\ell_1$-norm minimization approach. Specifically, the error is controlled by the coherent misfit (\ref{co_}) and the incoherent remainder (\ref{in_}),
which are shown to be small   when the noise  is small in $\ell_2$. This means that the reconstructed source is supported 
mainly in the vicinities $S_j$ of the true solution, $j \in T$,  and the grass in the image is low, i.e., 
the part of the solution supported away from the vicinities $S_j$ is small.

Proposition~\ref{Old_l1} implies that a key to control the noise is the constant $\gamma$ defined in (\ref{def:gamma}). 
In general, we have $\gamma= O(\sqrt{N})$. Indeed, let $\vect y$ be the minimum $\ell_2$-norm solution of the problem $\Ac \vect \rho=\vect b$ such that its support has at most size $N$. Let $\cA_{y}$ be the submatrix  of $\cA$ that contains the columns that correspond to the non-zero entries of $\vect y$. Then, the minimum $\ell_1$ solution $\bfrho$ satisfies  (by Cauchy-Schwartz $ \| \vect x \|_{\ell_1}  \leq  \sqrt{N} \|  \vect x \|_{\ell_2} , \forall \vect x \in \mC^N$)
$$ \| \bfrho \|_{\ell_1} \le \| \vect y\|_{\ell_1} < \sqrt{N} \| \vect y\|_{\ell_2} <  \sqrt{N} \left\|   (\cA^{*}_{y} \cA_{y} )^{-1}   \cA^{*}_{y} \right\|_{\ell_2} \| \bfb  \|_{\ell_2} .$$
%
%
%
This means that
the quality of the image deteriorates as the number of measurements $N \to \infty$. The remedy that we propose to this is to augment the imaging matrix 
$\Ac$ with a ``noise collector'' $\Cc$ as described in the following Proposition.  

\begin{proposition}\label{noise_c}
 There exists a  $N \times \Sigma$ noise collector matrix $\Cc$, with $\Sigma \lesssim e^N$, such that the columns of the augmented matrix $\Dc= [\Ac \, | \,  \Cc] $
satisfy $\| \vect d_j \| =1$, 
 \begin{equation}\label{deco_3}
 |\langle \vect a_i,\vect c_j \rangle | < \frac{1}{3 M} \,\,\,\,  \forall  i  \hbox{ and } j \, ,
\end{equation}
\begin{equation}\label{deco_2}
|\langle \vect c_i,\vect c_j \rangle | < \frac{1}{3 M}  \,\,\,\,  \forall  i \neq j, 
\end{equation}
 and there is a positive constant
\begin{equation}\label{gamma_est}
\gamma  \leq 18 M^2\, ,
\end{equation}
 such that
\begin{equation}\label{as:sine2}
\forall\, \vect b,\,\, \exists \,\,  \vect \rho \hbox{ such that }  \Dc \vect \rho=\vect b \hbox{ and } \| \vect \rho \|_{\ell_1} \leq  \gamma  \| \vect b \|_{\ell_2}. 
\end{equation}
\end{proposition}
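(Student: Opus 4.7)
The plan is to build $\Cc$ as a \emph{maximal} family of unit vectors in $\mC^N$ satisfying the two incoherence conditions (\ref{deco_3})--(\ref{deco_2}), and then exploit maximality to synthesize any $\vect b$ via a greedy, orthogonal matching-pursuit style decomposition.

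First I would construct $\Cc$. Let $\mathcal{U} = \{\vect u \in \mC^N : \|\vect u\|_{\ell_2} = 1 \text{ and } |\langle \vect a_i, \vect u\rangle| < 1/(3M) \text{ for every } i\}$, and pick (via Zorn's lemma) a maximal $\Cc = \{\vect c_1, \dots, \vect c_\Sigma\} \subset \mathcal{U}$ satisfying $|\langle \vect c_i, \vect c_j\rangle| < 1/(3M)$ for $i \neq j$. The conditions (\ref{deco_3}) and (\ref{deco_2}) then hold by construction. A standard volume/packing argument on the sphere $S^{2N-1} \subset \mathbb{R}^{2N}$ bounds the cardinality of any family of pairwise $1/(3M)$-incoherent unit vectors by $C(M)^N$, giving $\Sigma \lesssim e^N$.

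The crux is then the following consequence of maximality: \emph{for every unit vector $\vect u \in \mC^N$, at least one column $\vect d_j$ of $\Dc = [\Ac \,|\, \Cc]$ satisfies $|\langle \vect u, \vect d_j\rangle| \ge 1/(3M)$.} Indeed, if $\vect u \notin \mathcal{U}$ then some $\vect a_i$ witnesses this; otherwise $\Cc \cup \{\vect u\}$ would still satisfy all constraints unless some $\vect c_j$ did, contradicting maximality. Given an arbitrary $\vect b \in \mC^N$, I would build $\vect\rho$ iteratively: set $\vect b_0 = \vect b$, and at each step $k \ge 1$ choose a column $\vect d_{j_k}$ that maximizes $|\langle \vect b_{k-1}, \vect d_j\rangle|$ over the $K+\Sigma$ columns of $\Dc$, put $\alpha_k = \langle \vect b_{k-1}, \vect d_{j_k}\rangle$, and define $\vect b_k = \vect b_{k-1} - \alpha_k \vect d_{j_k}$, so that $\vect b_k \perp \vect d_{j_k}$. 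Applying the maximality consequence to $\vect b_{k-1}/\|\vect b_{k-1}\|_{\ell_2}$ gives $|\alpha_k| \ge \|\vect b_{k-1}\|_{\ell_2}/(3M)$, and Pythagoras yields $\|\vect b_k\|_{\ell_2}^2 \le \|\vect b_{k-1}\|_{\ell_2}^2(1 - 1/(9M^2))$. Thus the residual decays geometrically with ratio $r = \sqrt{1 - 1/(9M^2)}$, the series $\sum_k \alpha_k \vect d_{j_k}$ converges to $\vect b$, and collapsing repeated indices into $\rho_j = \sum_{k:\, j_k = j} \alpha_k$ produces $\Dc\,\vect\rho = \vect b$ with
\[ \|\vect \rho\|_{\ell_1} \le \sum_{k \ge 1} |\alpha_k| \le \|\vect b\|_{\ell_2} \sum_{k=0}^{\infty} r^k = \frac{\|\vect b\|_{\ell_2}}{1-r} \le 18 M^2 \, \|\vect b\|_{\ell_2}, \]
where the final step uses $1 - \sqrt{1-x} \ge x/2$ with $x = 1/(9M^2)$.

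The hardest step will be the cardinality estimate $\Sigma \lesssim e^N$: since we are in the regime $N \gg M^2$ where Welch-type arguments are vacuous, the exponential dependence on $N$ has to be extracted from a careful packing estimate on the complex sphere of Kabatiansky--Levenshtein flavor, and this is precisely the feature that makes the noise-collector impractical to build deterministically. The remaining technicalities -- attainment of the greedy maximum (immediate since $\Dc$ has finitely many columns), convergence of the infinite series, and the mild gap between strict and non-strict incoherence in the maximal-element argument -- are routine.
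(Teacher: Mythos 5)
Your proposal is correct and follows essentially the same route as the paper: a maximal (equivalently, greedily constructed until termination) incoherent family whose size is controlled by a sphere-packing volume bound, followed by a matching-pursuit decomposition of $\vect b$ with geometric residual decay at rate $\alpha=\sqrt{1-1/(9M^2)}$ and the bound $1/(1-\alpha)\leq 18M^2$. The only cosmetic differences are Zorn's lemma in place of the paper's explicit iteration and your remark about Kabatiansky--Levenshtein, which is unnecessary since the elementary volume comparison already yields $\Sigma\lesssim e^{N}$.
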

 \begin{proof} Let $\vect d_{i}=\vect a_{i}$, for $i=1,\dots,K$.
 We will construct iteratively a sequence of vectors $\vect d_{K+1}=\vect c_{1}$, $\vect d_{K+2}=\vect c_{2}$, $\dots$, $\vect d_{K+\Sigma}=\vect c_{\Sigma}$
 such that for each $s =1 \dots \Sigma$ 
 \[
 | \langle \vect d_{k}, \vect d_{K+s} \rangle | \leq \frac{1}{3 M},\, \forall k < s+K.
\] The iteration will terminate at a finite step, say, $\Sigma$. At the termination step we will have that for any $\vect b$, $\| \vect b \|_{\ell_2}=1$ there exists $k \leq \Sigma+K$ such that
 \begin{equation}\label{est_gamma}
 | \langle \vect d_{k}, \vect b \rangle | >  \frac{1}{3 M}.
 \end{equation}
 The finite time termination is a consequence of a volume growth estimate. Namely, if~(\ref{deco_2}) holds for all $i \neq j \leq \Sigma$, then the points $\vect c_{i}$, $i=1,2,\dots \Sigma$ are centers of non-overlapping 
 balls of radius $r$. The radius is bounded below:
 \[
 r  > \frac{1}{2} \alpha, \hbox{ where } \alpha=  \sqrt{1 - \frac{1}{9 M^2}}.
 \] 
 Thus the iteration will terminate at a finite step.
Furthermore,  if $r< \sqrt{2}$ then the number $\Sigma \lesssim e^{N \log{\frac{\sqrt{2}}{r}}}$ as the dimension $N \to \infty$, because $(r/\sqrt{2})^N \Sigma \sim 1$. 

 Let us finally estimate $\gamma$ in~(\ref{gamma_est}). Without loss of generality, we may assume $\| \vect b \|_{\ell_2}=1$. By our construction, there exists $k \leq \Sigma+K$ such that~(\ref{est_gamma})
holds. Thus we can choose $\vect d_{n_1}$ and $c_1$ so that $|c_1| \leq 1$ and $\vect b_1= \vect b - c_1  \vect d_{n_1}$
satisfies $\| \vect b_1 \|_{\ell_2} \leq  \alpha$. 
Using~(\ref{est_gamma}) inductively we can find a sequence  $\{ \vect d_{n_i}\}_{i=1}^{\infty}$,  and a sequence  $\{ c_i \}_{i=1}^\infty$, so that $|c_i|\leq \alpha^{i-1}$ and the vectors  
$\vect b_n= \vect b - \sum_{i=1}^n c_i   \vect d_{n_i}$ satisfy
$\| \vect b_n \|_{\ell_2} \leq \alpha^{n}$.
Therefore,
\begin{equation}
 \label{gamma1}
 \vect b = \sum_{i=1}^\infty c_i  \vect d_{n_i}
\end{equation}
and 
\begin{equation}
 \label{gamma2}
\| \vect \rho \|_{\ell_1}  \leq \sum_{i=1}^\infty |c_i|\leq \sum_{i=1}^\infty \alpha^{i-1}=\frac{1}{1-\alpha} \leq 18 M^2
\end{equation}
by the triangle inequality.
\end{proof}

Proposition~\ref{noise_c} is an important result as it shows that the constant $\gamma$ in (\ref{def:gamma}) 
can be made independent of $N$ by augmenting the columns of the linear system with  columns of a noise collector matrix $\Cc$. 
The columns of $\Cc$ are required to be 
decoherent to the columns of $\cA$ (see (\ref{deco_3})), and decoherent between them (see (\ref{deco_2})). 
Recalling that the columns of $\cA$ for the imaging problem are Green's vectors corresponding to points in the imaging 
window, we stress that the columns of $\cC$ do not admit a physical interpretation. They do not correspond to any 
points in the imaging window or elsewhere. Similarly, the $\Sigma$ last components of the augmented unknown vector $\bfrho$ 
in  (\ref{as:sine2}) do not have a physical meaning. They correspond to fictitious auxiliary unknowns that are introduced 
to regularize the $\ell_1$-norm minimization problem.


The drawback in this theory is that the size of the noise collector is exponential $\Sigma \lesssim e^N$. This makes it impractical. Our numerical experiments, however, indicate 
great improvement in the performance of $\ell_1$-norm minization with $\Sigma \lesssim 10K$ when the columns of $\cC$ are selected at random 
(its entries are i.i.d. Gaussian random variables with mean zero and variance $1/N$).  This works well for additive mean zero uncorrelated noise. 
For other types of noise, the idea is to construct a library that represents the values that the noise vector $\delta \bfb$ takes. It is the elements of this library 
that should be used as columns of the noise collector matrix $\cC$.  
A different approach can be followed when the noise $\delta \vect b$ is sparse so its $\ell_1$-norm is small. Then, $\Cc$ could be simply taken as the 
 $N \times N$ identity matrix $I$. 
This approach has been proposed and analyzed in \cite{Baraniuk09} and provides exact recovery for sparse noise vectors $\delta \bfb$. 

In the next section we present numerical results to illustrate the relevance of our theory in imaging sparse sources. 
We focus our attention in the case of additive mean zero uncorrelated noise which is not sparse. The results show  a dramatic 
improvement  using the noise collector.

\section{Imaging results in the framework of propositions \ref{Old_l1} and \ref{noise_c}}
\label{sec:vicinities}
We illustrate here the relevance of Propositions \ref{Old_l1} and \ref{noise_c} in imaging. 
We compare the solution $\vect \rho_{\ell_1}$ obtained with the $\ell_1$-norm minimization algorithm GeLMA \cite{Moscoso12},
and the $\ell_2$-norm Kirchhoff migration solution (\ref{sol_ell2}).
Our results illustrate:  
\begin{enumerate}
\item The well-known super-resolution for $\ell_1$,   
meaning that $\bfrho_{\ell_1}$ determines the support of the unknown $\bfrho$ with higher accuracy than the conventional resolution limits, provided the assumptions of Proposition \ref{Old_l1}
 are satisfied.
\item The equally well known sensitivity of $\ell_1$ to additive noise. This is made more precise in the imaging context where the constant $\gamma$ in (\ref{def:gamma}) grows with the number of measurements as $\sqrt{NS}$, where $NS$ is the total number of measurements acquired by $N$ receivers at $S$ frequencies. We observe that, for a given level of noise, the $\ell_1$-norm reconstruction deteriorates as the number of measurements increases.  
\item  The noise collector matrix $\cC$ stabilizes $\ell_1$-norm minimization in the presence of noise. 
\end{enumerate}
We also show how the bandwidth, the array size, and the number of sources affect the vicinities defined in (\ref{vicinity}). The numerical results are not specialized to a particular physical regime. They illustrate only the 
role of the Propositions \ref{Old_l1} and \ref{noise_c} in solving the associated linear systems. 

\subsection*{Imaging setup}

The images are obtained in a homogeneous medium with an active array of $N=25$ transducers. We collect measurements corresponding to $S=25$ frequencies equispaced in the bandwidth. Thus, the length of the data vector $\bfb$ is $N S=625$.  The ratios between the array size $a$ and the distance $L$ to IW, and between the bandwidth $2B$ and the central frequency $\omega_0$ vary in the numerical experiments, so the classical Rayleigh resolution limits change. The size of the IW is fixed. It is discretized using a uniform grid of $K=3721$ points of size $\lambda_0/2$  in range and cross-range directions.

The images have been formed by solving the $\ell_1$-norm minimization problem (\ref{l1normsol}) using the algorithm GeLMA in \cite{Moscoso12}. GeLMA is an iterative shrinkage-thresholding algorithm
that provides the exact solution, for noiseless data, independently of the value of the parameter $\tau$ (see (\ref{min-max})) used to promote the sparsity of the images.

\subsection*{Results for noiseless data. Super-resolution and $\ell_1$-reconstructions}
\begin{figure}[h]
\begin{center}
\begin{tabular}{cccc}
\includegraphics[scale=0.18]{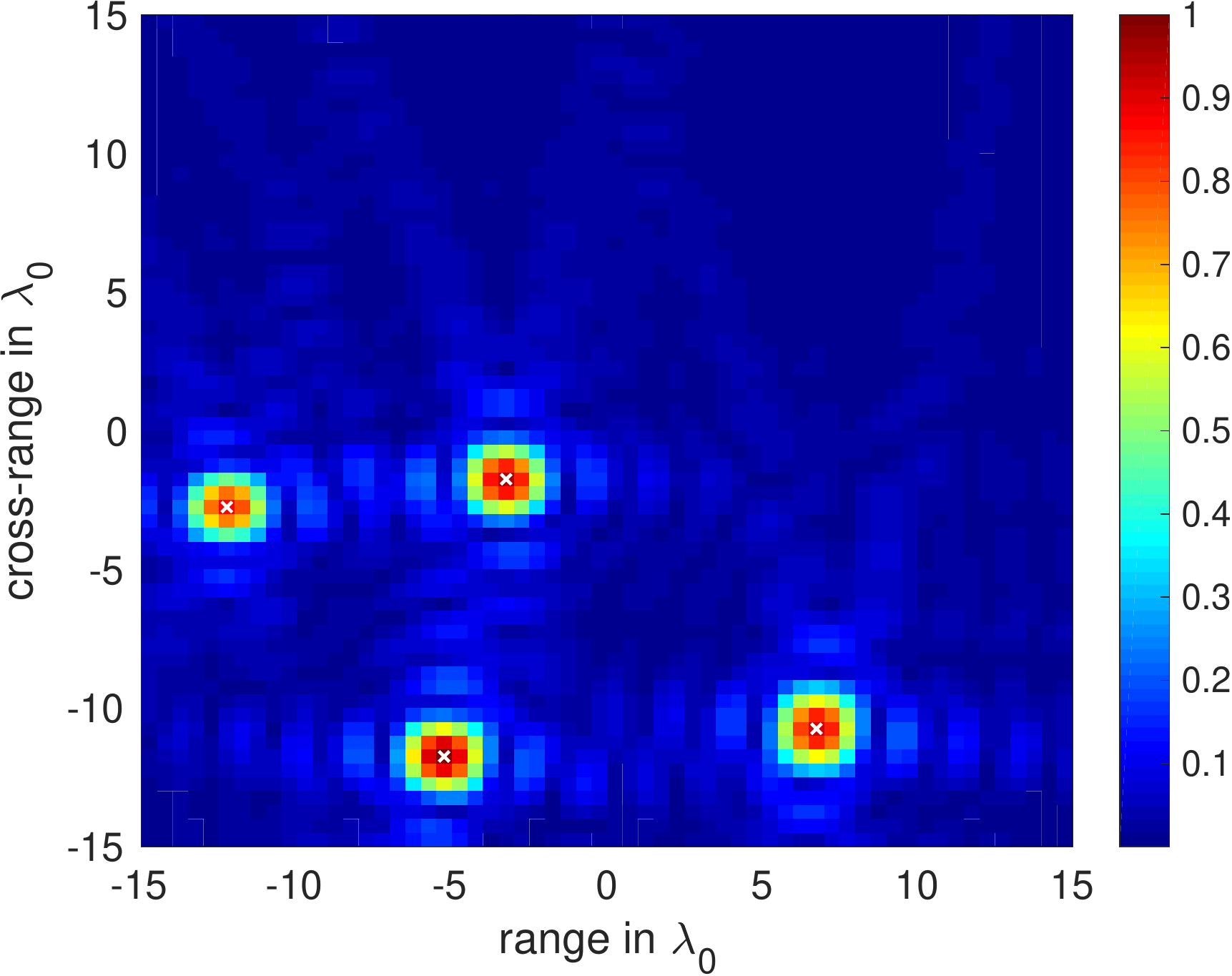}&
\includegraphics[scale=0.18]{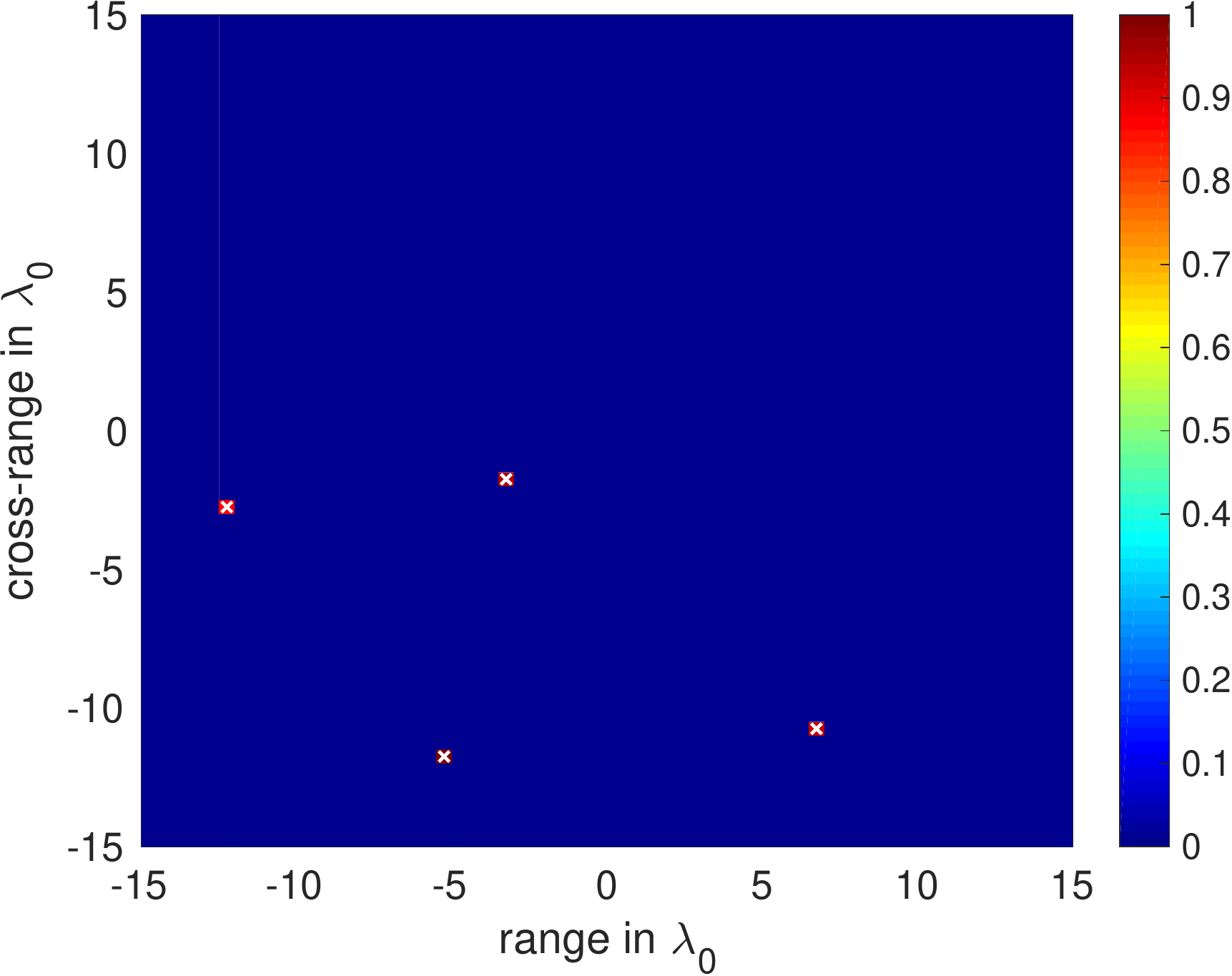}&
\includegraphics[scale=0.18]{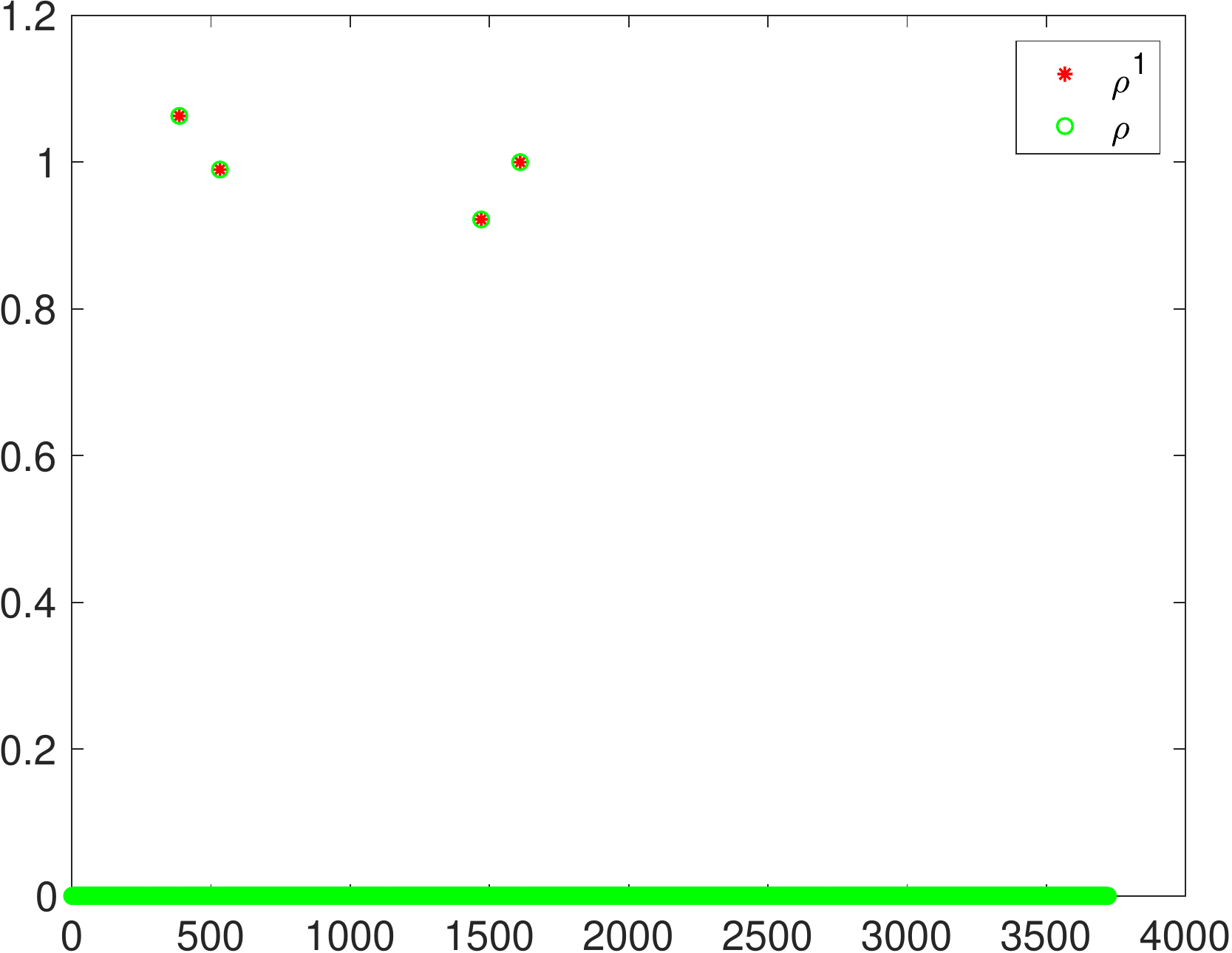} &
\includegraphics[scale=0.18]{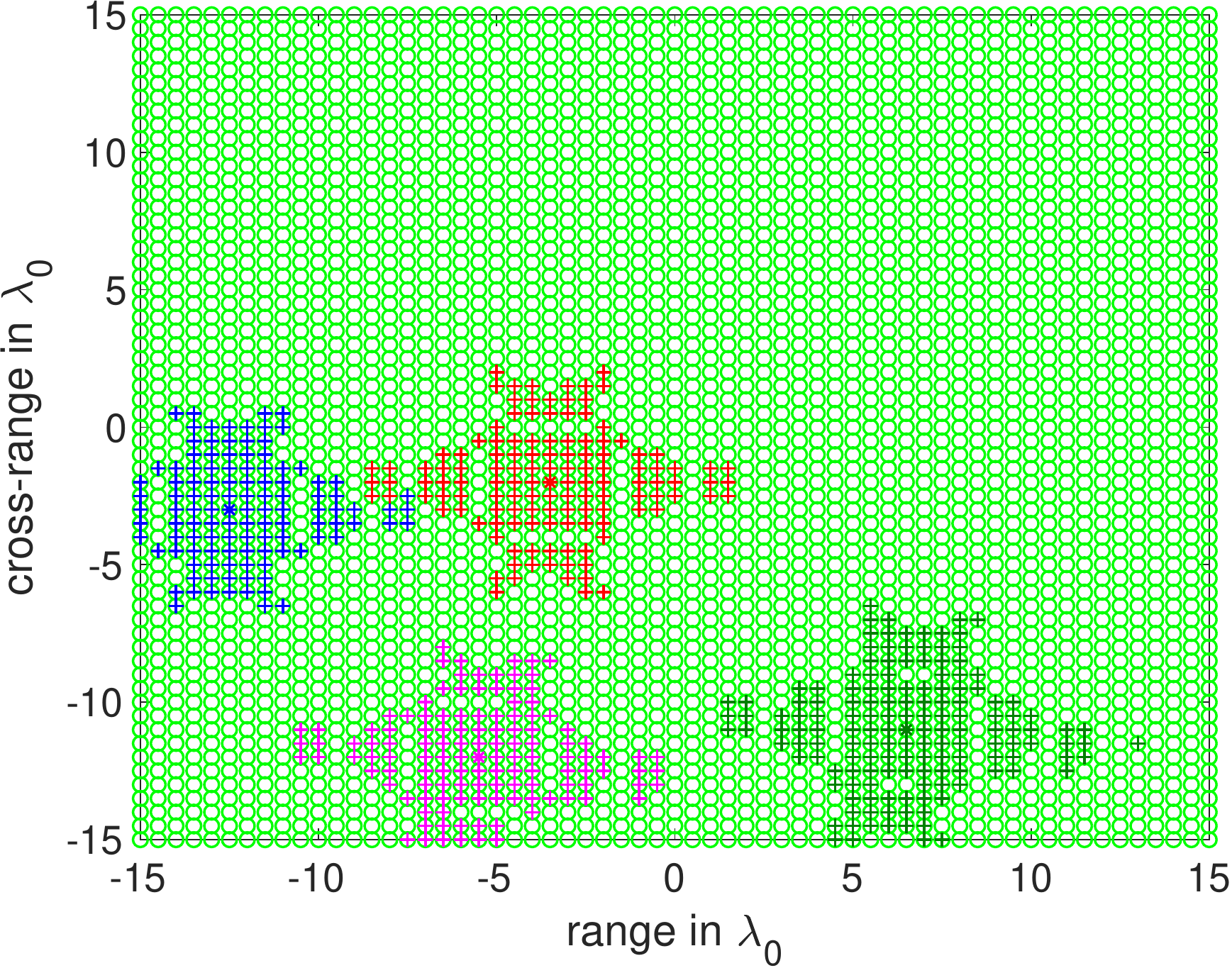}\\
\end{tabular}
\begin{tabular}{cccc}
\includegraphics[scale=0.18]{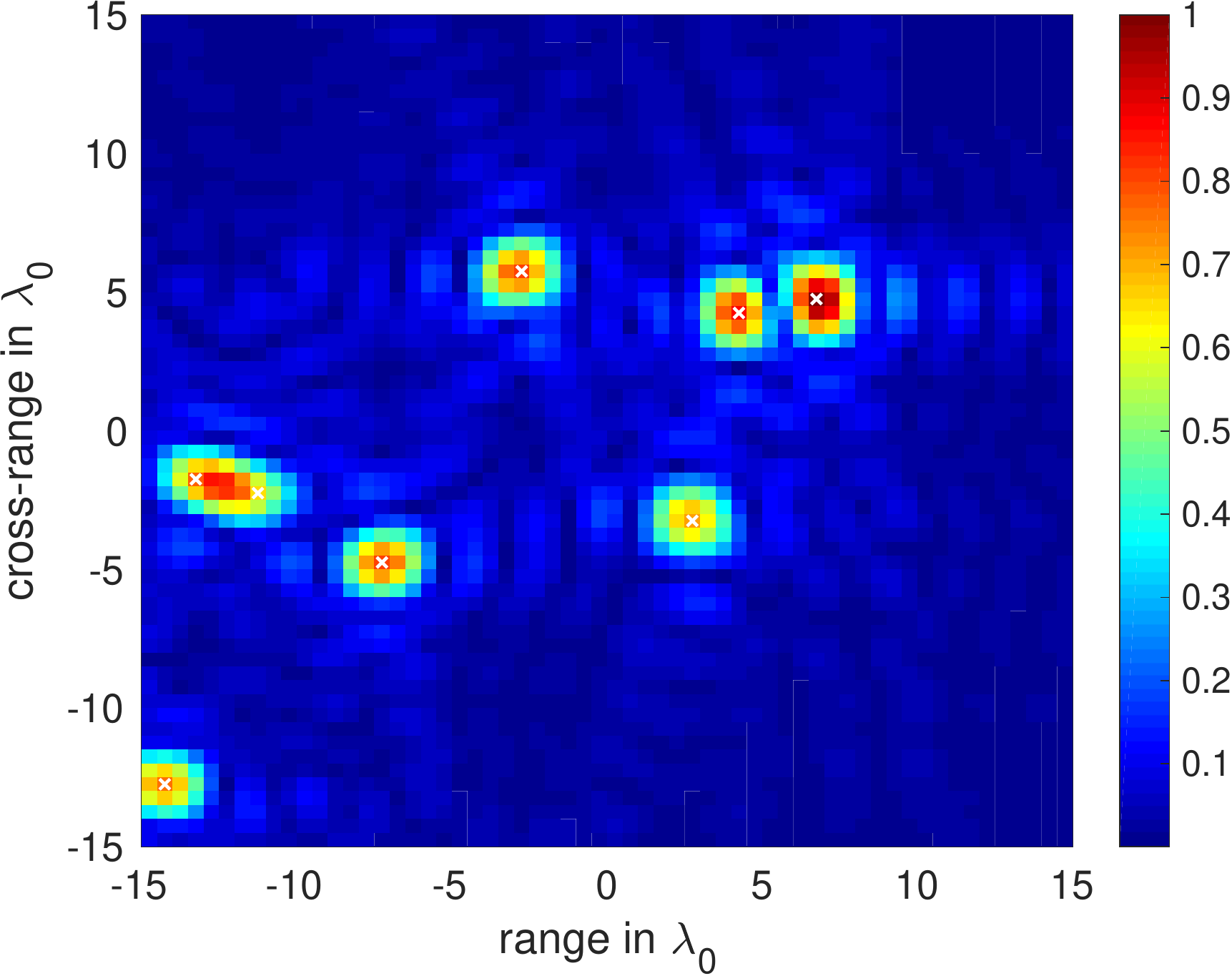}&
\includegraphics[scale=0.18]{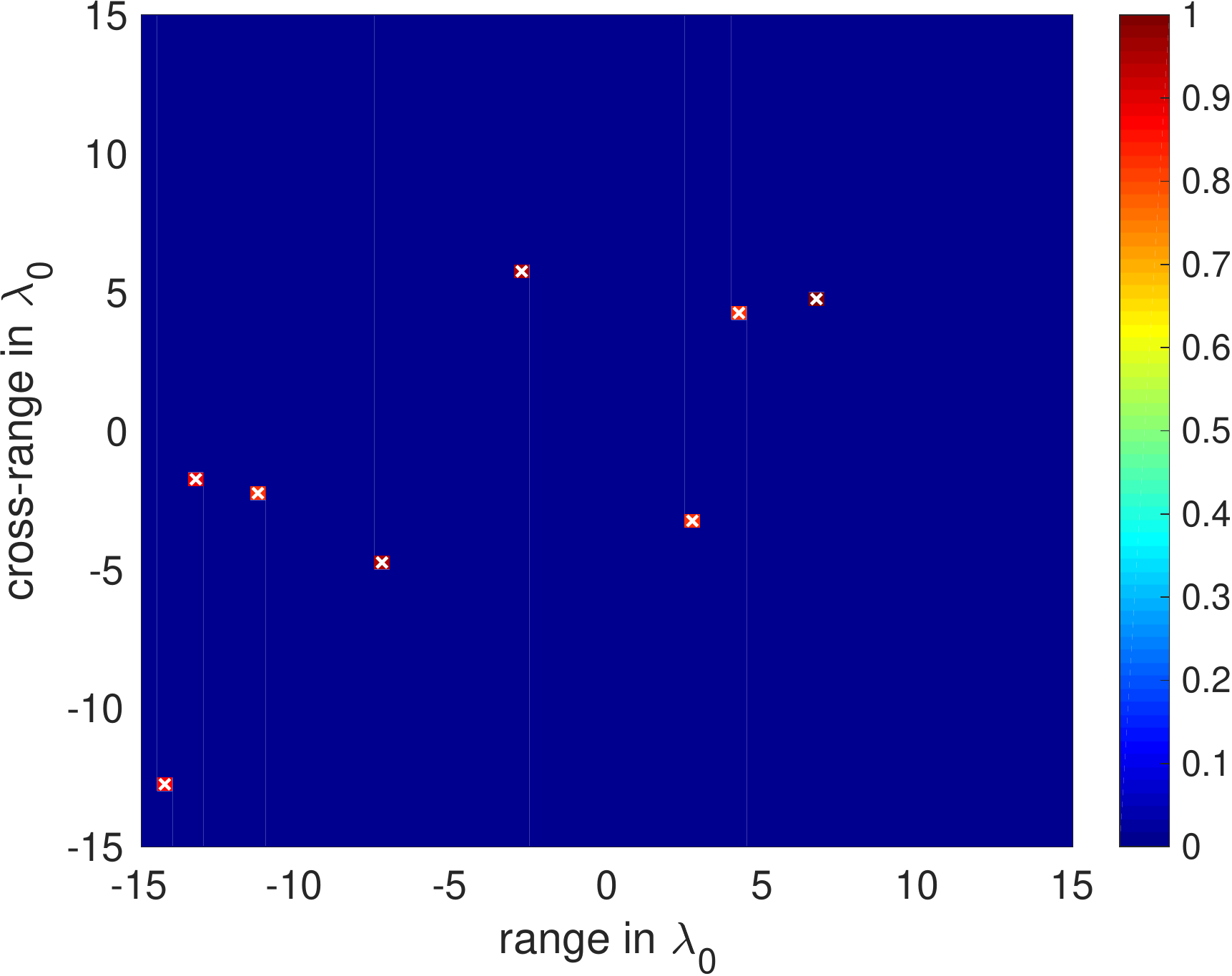}&
\includegraphics[scale=0.18]{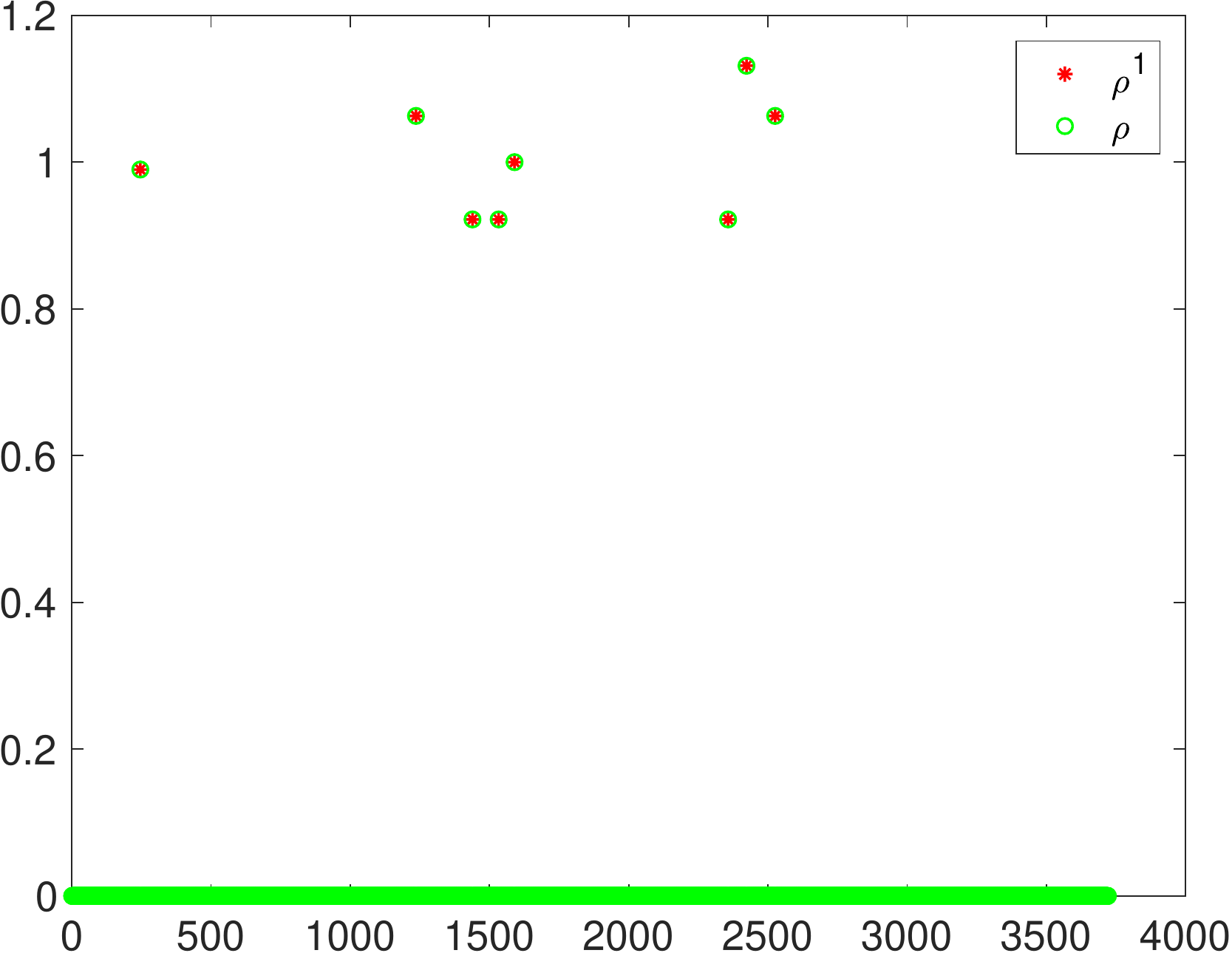} &
\includegraphics[scale=0.18]{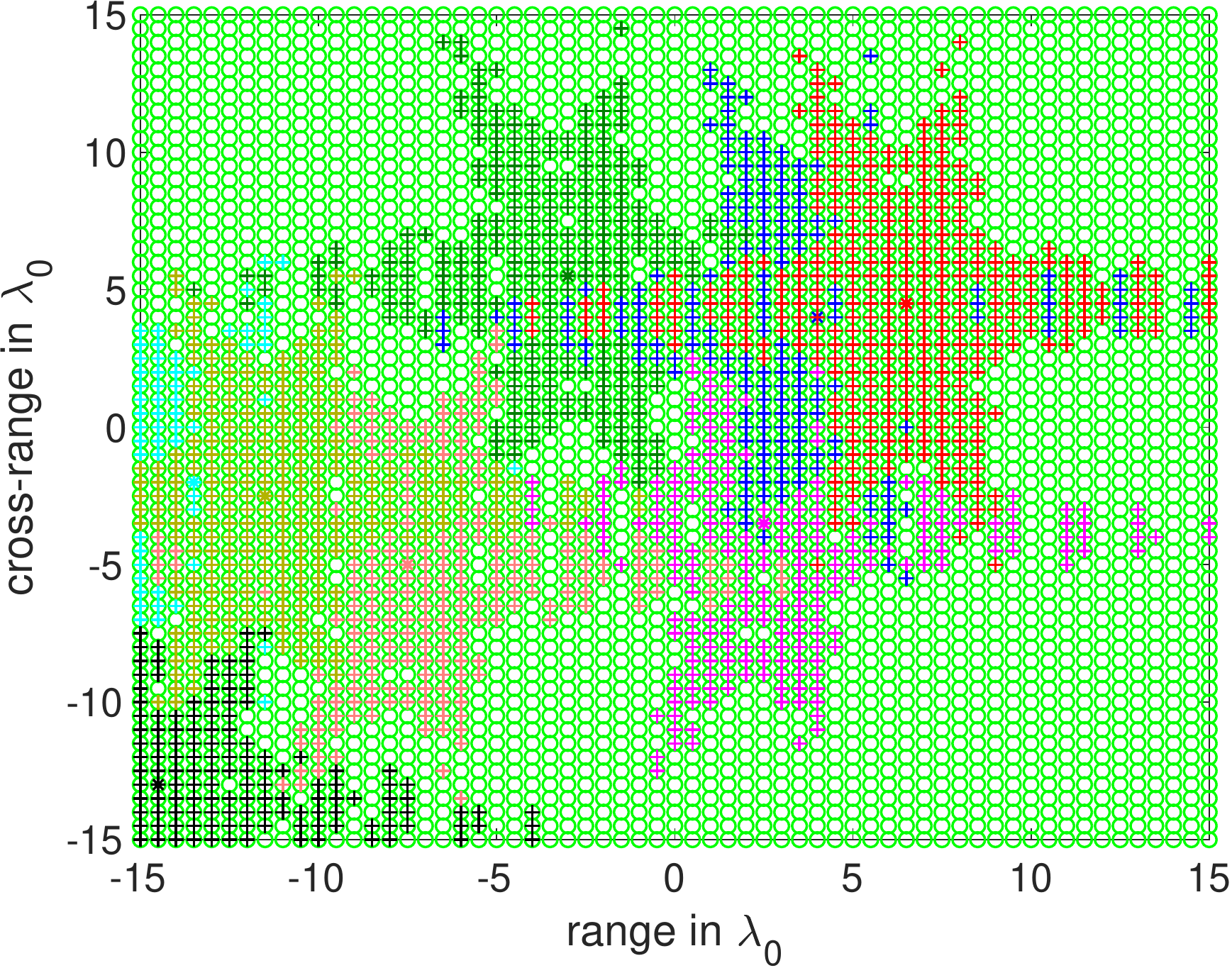} \\
\end{tabular}
\end{center}
\caption{Imaging with no noise $M=4$ (top row) and $M=8$ (bottom row) targets. From left to right: $\vect \rho_{\ell_2} $,
$\vect \rho_{\ell_1}$, comparison between $\vect \rho_{\ell_1}$ (red stars) and the true solution (green circles), 
and the vicinities $S_j$, $j=1,\ldots,M$, plotted with different colors. 
Large array aperture and large bandwidth; $a/L=1/2$ and $(2B)/\omega_0=1/2$. 
}
\label{fig:th1}
\end{figure}

Figure \ref{fig:th1} shows the results obtained for a relatively large array and a relatively large bandwidth corresponding to ratios $a/L=1/2$ and $(2B)/\omega_0=1/2$ when the data is noiseless. 
From left to right we show the $\vect \rho_{\ell_2}$ solution (\ref{sol_ell2}), the $\vect \rho_{\ell_1}$ solution obtained from (\ref{l1normsol}),  the comparison between $\vect \rho_{\ell_1}$  (red stars) and the true solution $\vect \rho$ (green circles),
and the vicinities $S_j$  defined in (\ref{vicinity}) plotted with different colors.  The top and bottom rows show images with
$M=4$ and $M=8$ sources, respectively. The exact locations of the sources are indicated with white crosses in the two leftmost columns. 
The $M=4$ sources in the top row are very far apart: their vicinities do not overlap as it can be seen in the top right image. 
In this case, all the conditions of Proposition \ref{Old_l1} are satisfied and we find the 
exact source distribution  by  $\ell_1$-norm minimization.
The $M=8$ sources in the bottom row are closer, and their vicinities are larger; according to (\ref{vicinity}) the size of the vicinities increases with $M$.
In fact, their vicinities overlap as it can be seen in the bottom right image. Still, the $\ell_1$-norm minimization algorithm finds the exact solution. 

The classical resolution limits for this setup are $c_0/(2B)=2\lambda_0$ in range and $\lambda_0 L/a=2\lambda_0$ in cross-range. This means that the resolution of the $\ell_2$-norm solutions is of the order $O(2\lambda_0)$; see the left column of Figure \ref{fig:th1}. Recall that our discretization is $\lambda_0/2$, that is four times finer than the classical resolution limit. Thus, each source roughly corresponds to a four by four pixel square, which is what the $\vect \rho_{\ell_2}$ solutions show. Note that for $M=8$, because two sources are quite close, the $\vect \rho_{\ell_2}$ solution only displays $7$ sources.  The ability of $\ell_1$-norm minimization to determine the location of the sources with a better accuracy than the classical resolution limits is referred to as {\em super-resolution}. 

We stress that if the IW is discretized using a very fine grid, with grid size smaller than the classical resolution limit, then the columns of the matrix $\Ac$ are almost parallel and the decoherence condition (\ref{ortho}) is violated. The columns that are almost parallel to those indexed by the support of the true solution are contained in the vicinities (\ref{vicinity}). The number of columns that belong to the vicinities depend on the imaging system.
To illustrate the effect of the array and bandwidth sizes on the size of the vicinities we plot in Figure \ref{fig:thvic} the vicinity of one source for  $M=4$. From left to right we plot the vicinities for $[a/L,2B/\omega_0]=[1/2,1/2]$, $[a/L,2B/\omega_0]=[1/2,1/4]$, $[a/L,2B/\omega_0]=[1/4,1/2]$, and $[a/L,2B/\omega_0]=[1/4,1/4]$.
As expected, the size of the vicinity is proportional to the resolution estimates $\lambda_0 L/a$ and $c_0/(2B)$ in  cross-range and range, respectively. 

\begin{figure}[t]
\begin{center}
\begin{tabular}{cccc}
$\scriptstyle a/L=1/2,\  (2B)/\omega_0=1/2$ & $\scriptstyle a/L=1/4,\  (2B)/\omega_0=1/2$ & $\scriptstyle a/L=1/2,\  (2B)/\omega_0=1/4$ &  $\scriptstyle a/L=1/4,\  (2B)/\omega_0=1/4$ \\
\includegraphics[scale=0.18]{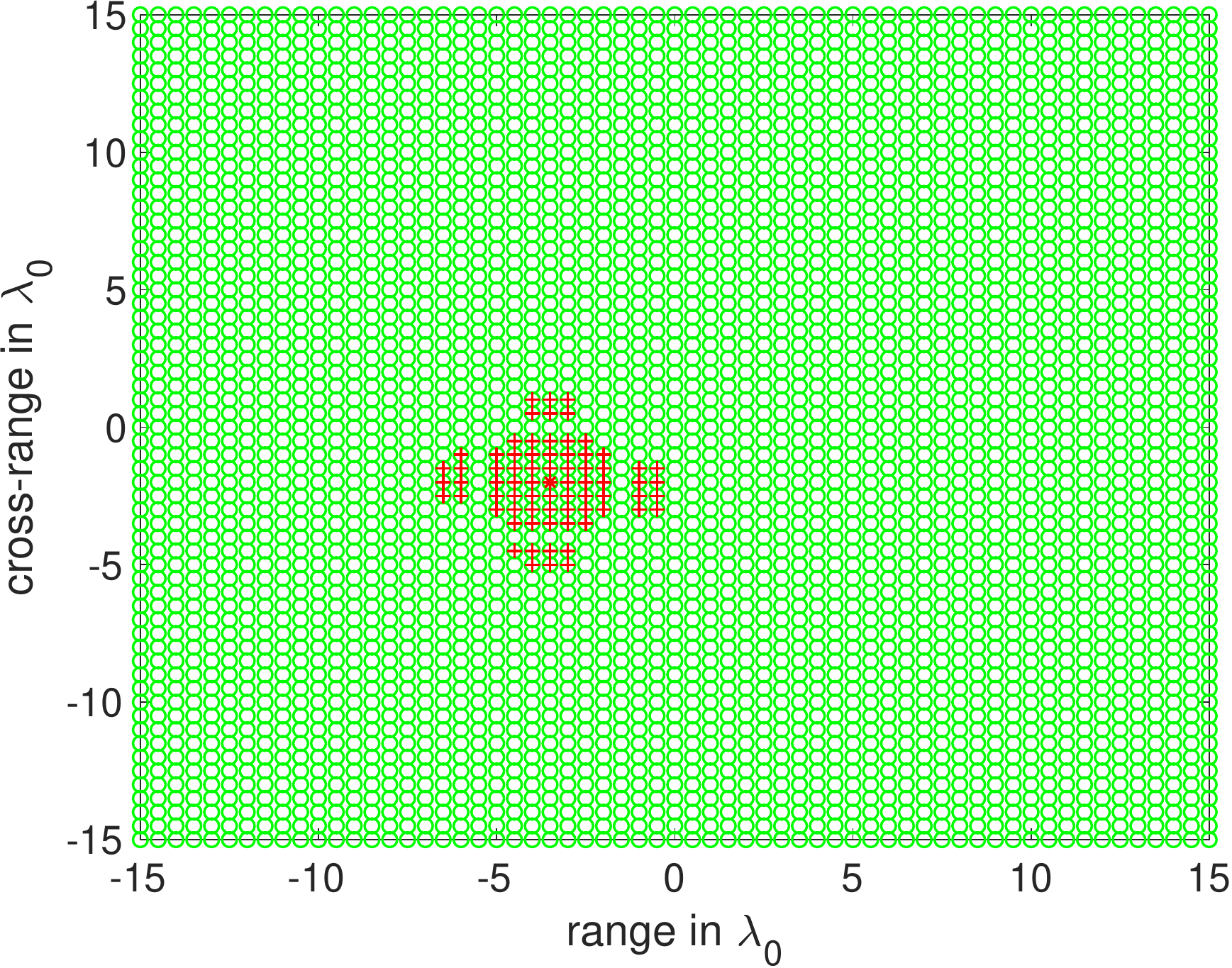}& 
\includegraphics[scale=0.18]{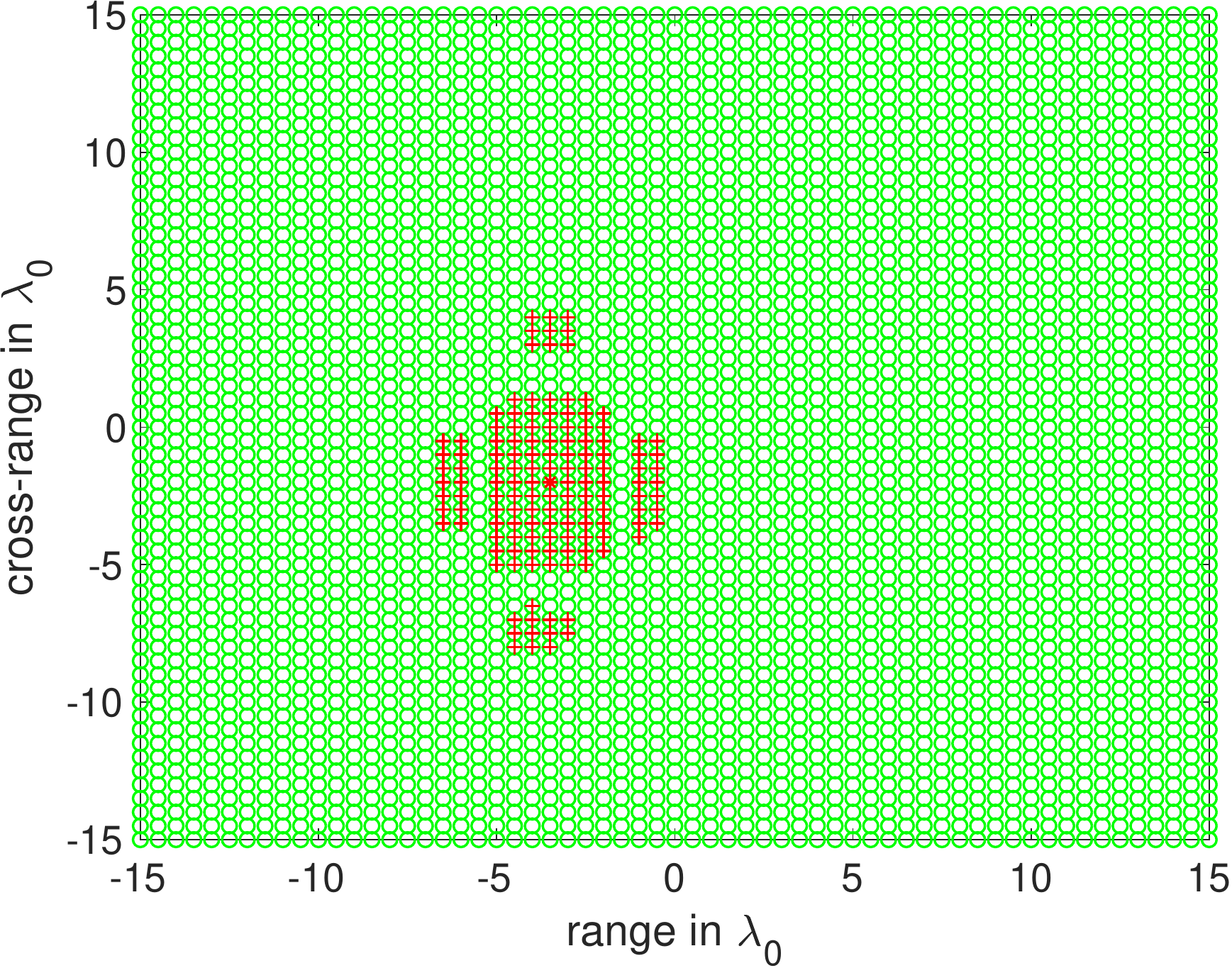}&
\includegraphics[scale=0.18]{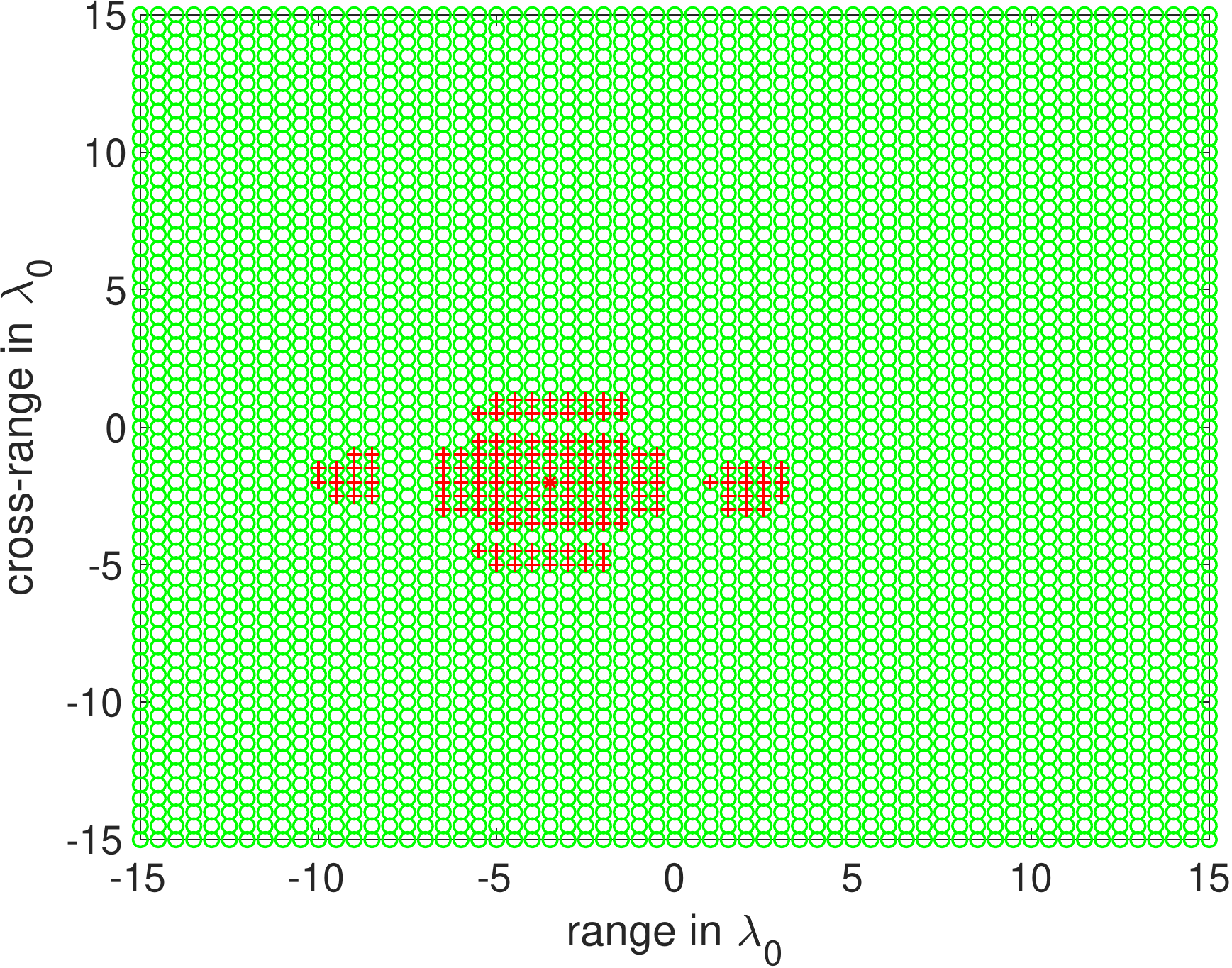} &
\includegraphics[scale=0.18]{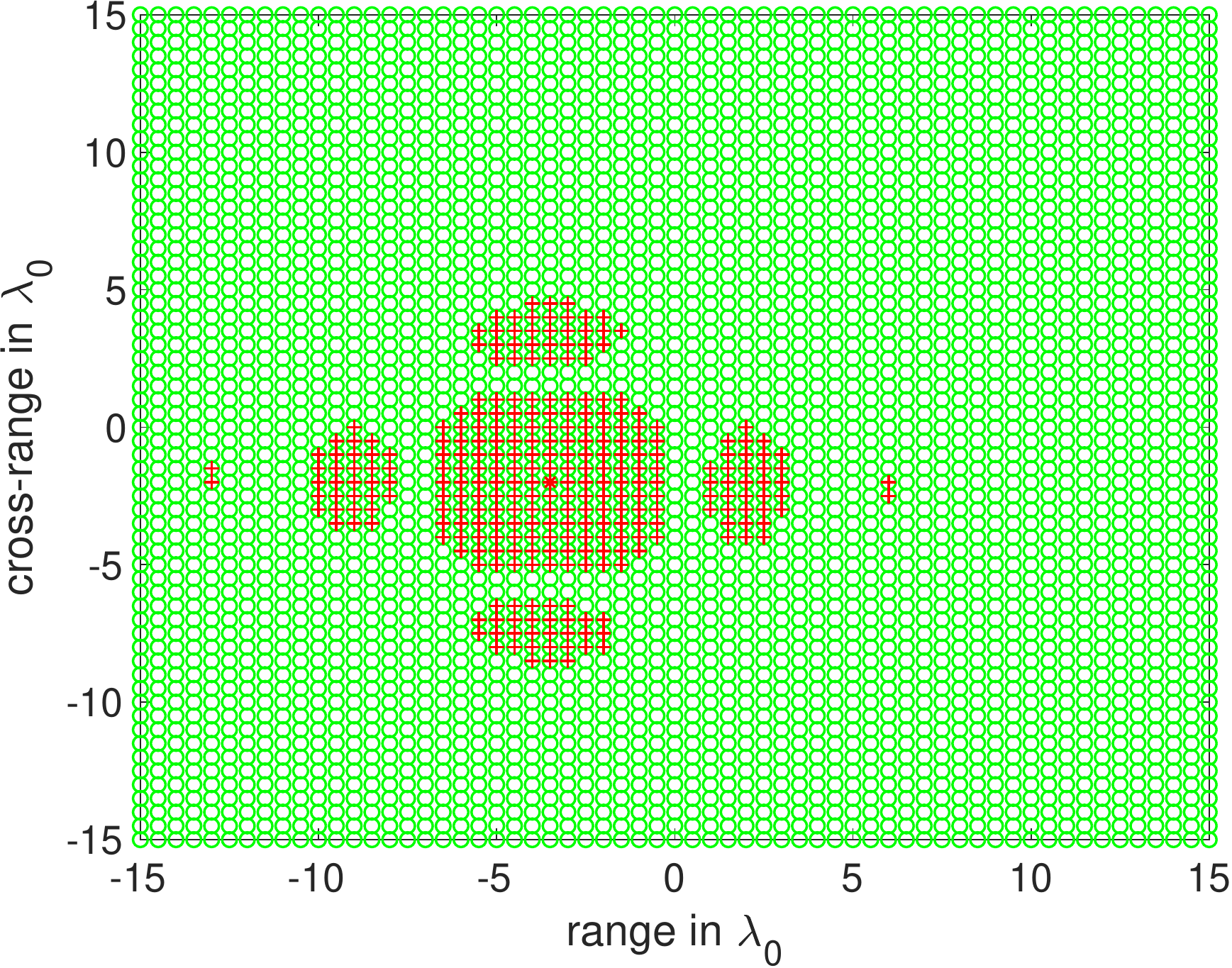}\\
\end{tabular}
\end{center}
\caption{Vicinities (\ref{vicinity}) for different array and bandwidth sizes. From left to right the ratios $(a/L,2B/\omega_0)$ are: 
$(1/2,1/2)$, $(1/2,1/4)$, $(1/4,1/2)$ and $(1/4,1/4)$. }
\label{fig:thvic}
\end{figure}

\subsection*{Results for noisy data. Stabilization of $\ell_1$-norm minimization using the noise collector matrix $\Cc$}
We add now mean zero uncorrelated noise to the data. We examine the results for different values of the signal to noise ratio (SNR). 
We consider first the same imaging configuration as in Figure \ref{fig:th1} with $M=4$ sources. 
The number of data is $NS=625$ and the number of unknowns is $K=3721$. 
In the top row of Figure \ref{fig:noisec1} we plot the minimal $\ell_1$-norm  image obtained by solving problem (\ref{l1normsol}) when the SNR is  $4$dB.
The true solution is shown with white crosses. It is apparent that, even for  this moderate level of noise,  $\ell_1$-norm minimization  
fails to give a good image. 

The problem can be alleviated using the noise collector matrix $\Cc$, as it can be seen in the results shown in the bottom row of Figure \ref{fig:noisec1}. To construct the noise collector matrix $\Cc$ that verifies the assumptions of Proposition \ref{noise_c}, we take its columns $\vect c_j$ to be random vectors in $\mathbb{C}^{NS}$ with mean zero and variance $1/(NS)$. Their $\ell_2$-norm tends to one as $NS\rightarrow\infty$, and we check that conditions (\ref{deco_3}) and (\ref{deco_2}) are satisfied. 
In theory, the number of columns $\Sigma$ should be very large, of the order of $e^{NS}$, but in practice,  we obtain stable results with $\Sigma$ of the order of $10^4$, which is roughly $3K$.

The solution $\vect \bfrho_{\ell_1} \in\mathbb{C}^{K+\Sigma}$ obtained with the noise collector can be decomposed into two vectors; the vector $\vect \bfrho_{iw}\in\mathbb{C}^{K}$ 
corresponding to the sought solution in the $IW$, and the vector $\vect \bfrho_{noise}\in\mathbb{C}^{\Sigma}$ 
that absorbs the noise. We display these two vectors in the bottom right plot of Figure \ref{fig:noisec1}. The first $K$ components correspond to $\vect \bfrho_{iw}$ and the remaining $\Sigma$ components
to $\vect \bfrho_{noise}$. It is remarkable that  the vector  $\vect \bfrho_{iw}$ is very close to the true solution and that it contains only some small {\em grass}. This means that both the coherent misfit (\ref{est_co-}) and the incoherent remainder (\ref{est_in-}) are now small. This is in accordance with the theoretical error estimates (\ref{est_co-}) and (\ref{est_in-}), where $\gamma$ is now independent of the dimension of the data vector $NS$; see (\ref{gamma_est}).

\begin{figure}[htbp]
\begin{center}
\begin{tabular}{cc}
\includegraphics[scale=0.28]{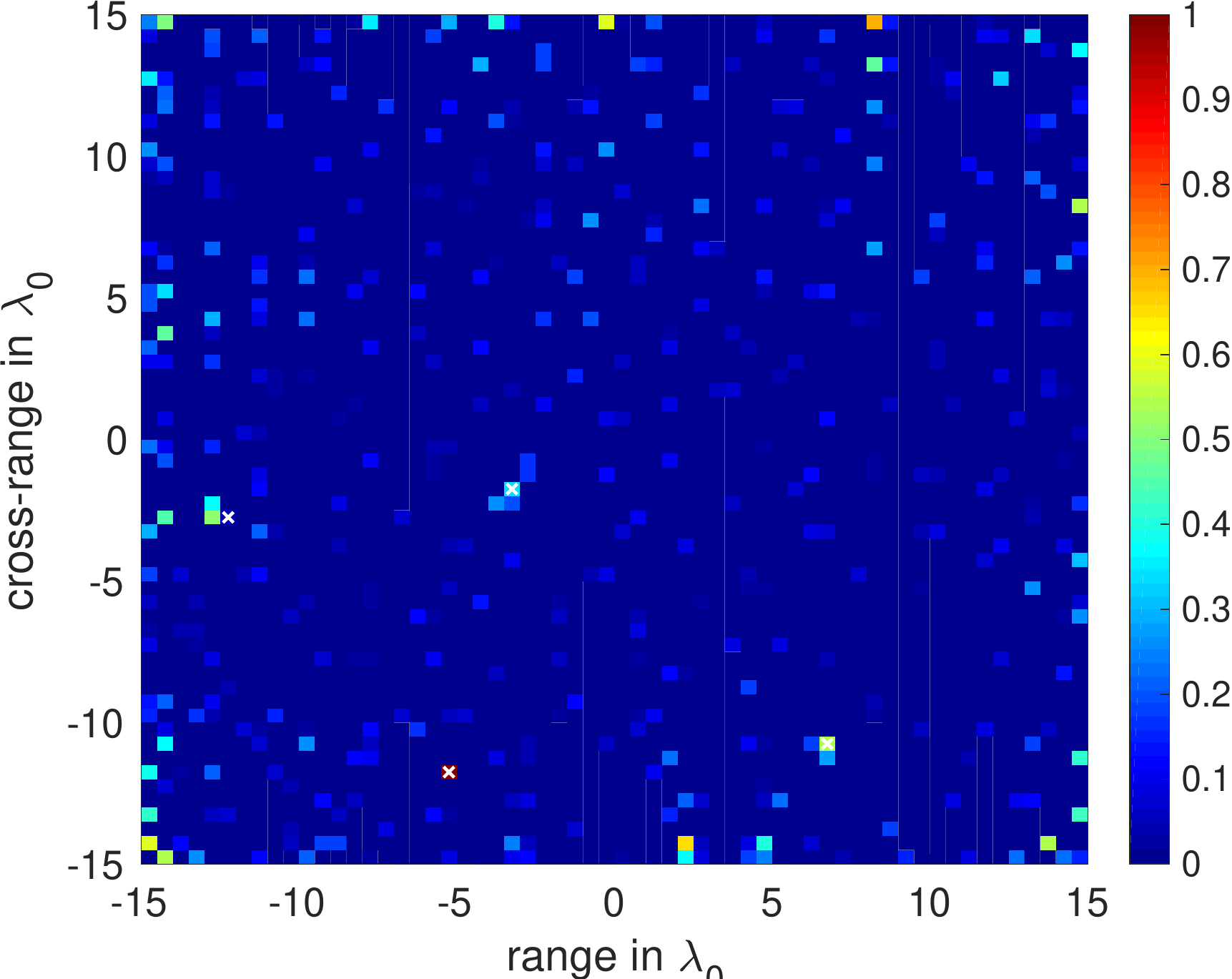}&
\includegraphics[scale=0.28]{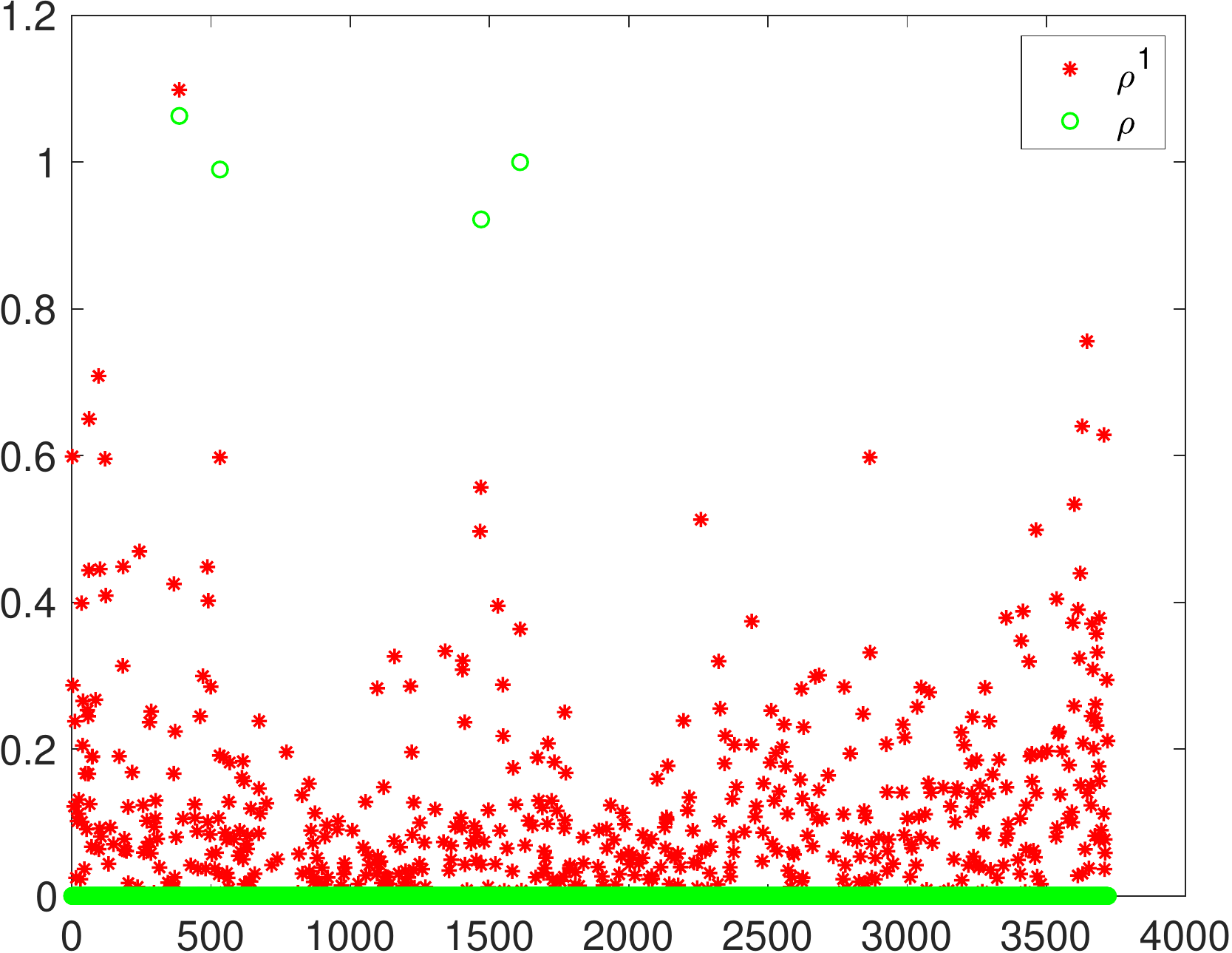}\\
\includegraphics[scale=0.28]{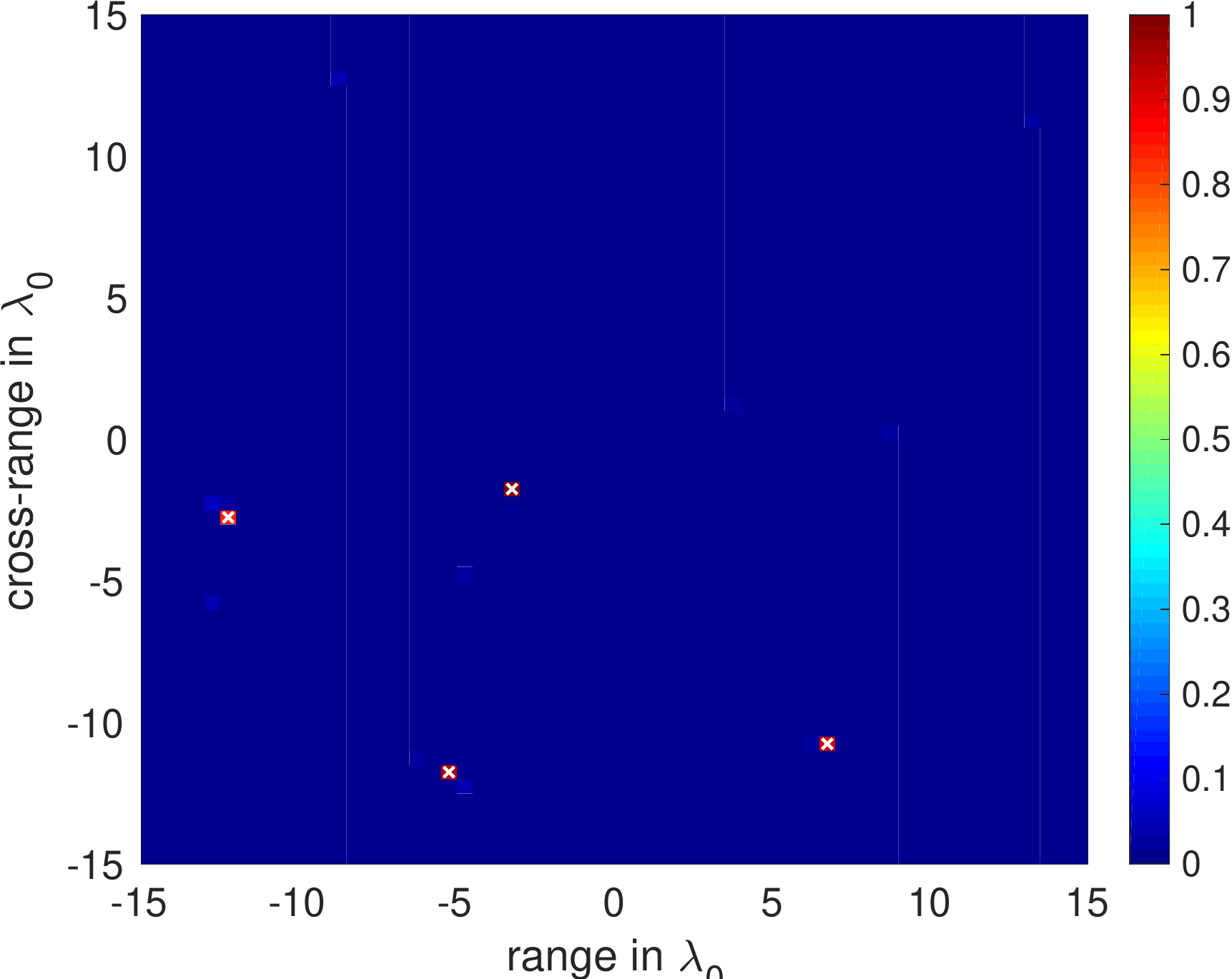}&
\includegraphics[scale=0.28]{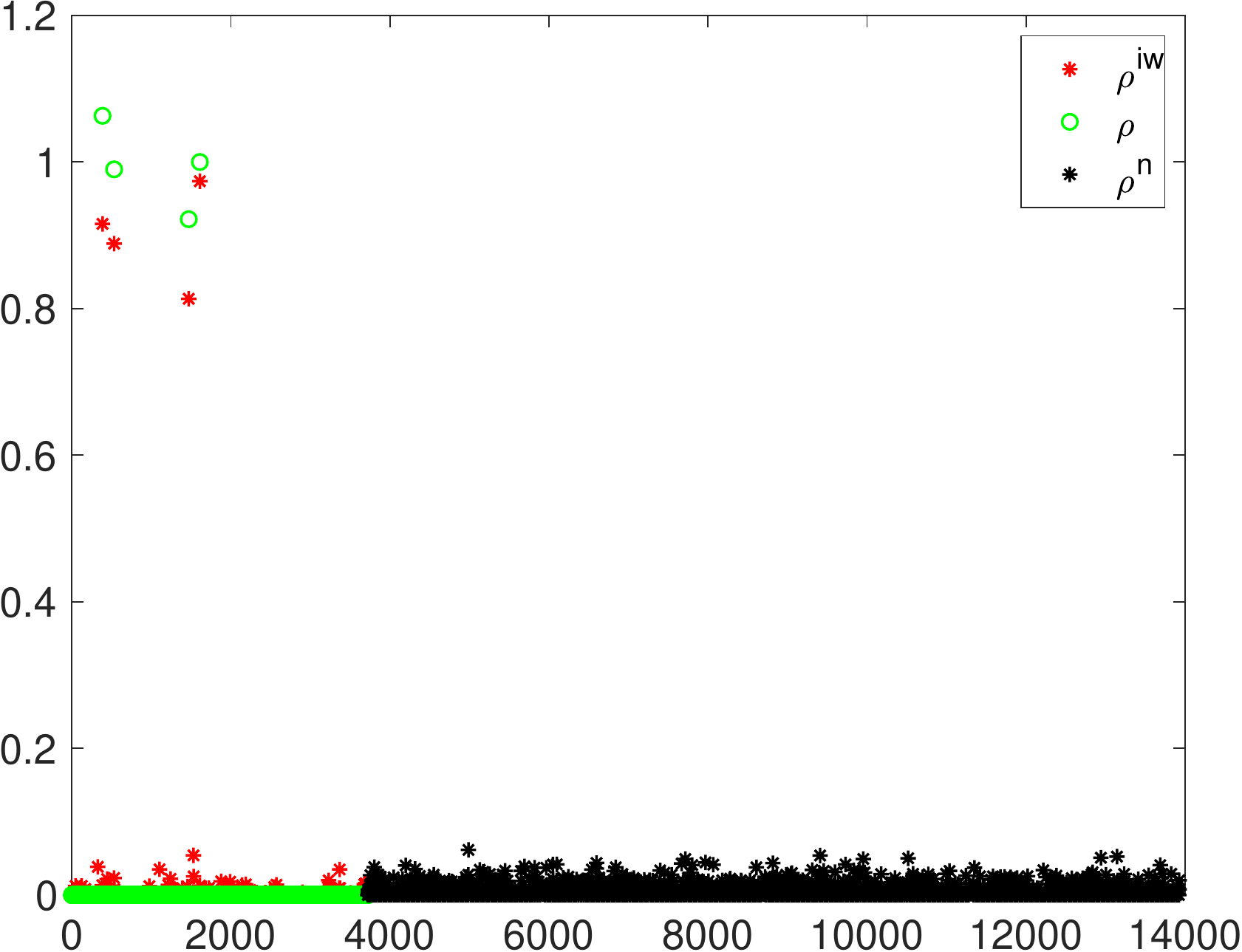}\\
\end{tabular}
\end{center}
 \caption{Imaging with noisy data, SNR~$=4$dB. The top and the bottom rows show the results without and with the noise collector, respectively. 
 The left columns show the  $\vect \rho_{\ell_1}$ images (the true solution is displayed with white crosses) and the right column the comparison (red stars) with the true solution (green circles).
 In the bottom right image, the first $K=3721$ components of the solution corresponding to the IW are plotted with red stars, and the $\Sigma=12000$ next components corresponding to the noise collector are plotted with black stars.
}
\label{fig:noisec1}
\end{figure}


In the next figures, we consider an imaging setup with a large aperture $a/L=1$ and a large
bandwidth $(2B)/\omega_0=1$. Moreover, we increase the pixel size to $\lambda_0$  in both range and cross-range directions, so the Rayleigh resolution is of the order of a pixel. With this imaging  configuration, the columns of the model matrix $\Ac$ are less coherent than in the previous numerical experiments. 
We plot in Figure \ref{fig:noisec2} the $\ell_1$-norm image for a SNR~$=4$dB. With a less coherent matrix $\Ac$ the results are very good. 
This highlights the inherent difficulty in imaging  when high  resolution is required as in Figure \ref{fig:noisec1} .   
\begin{figure}[htbp]
\begin{center}
\begin{tabular}{cc}
\includegraphics[scale=0.28]{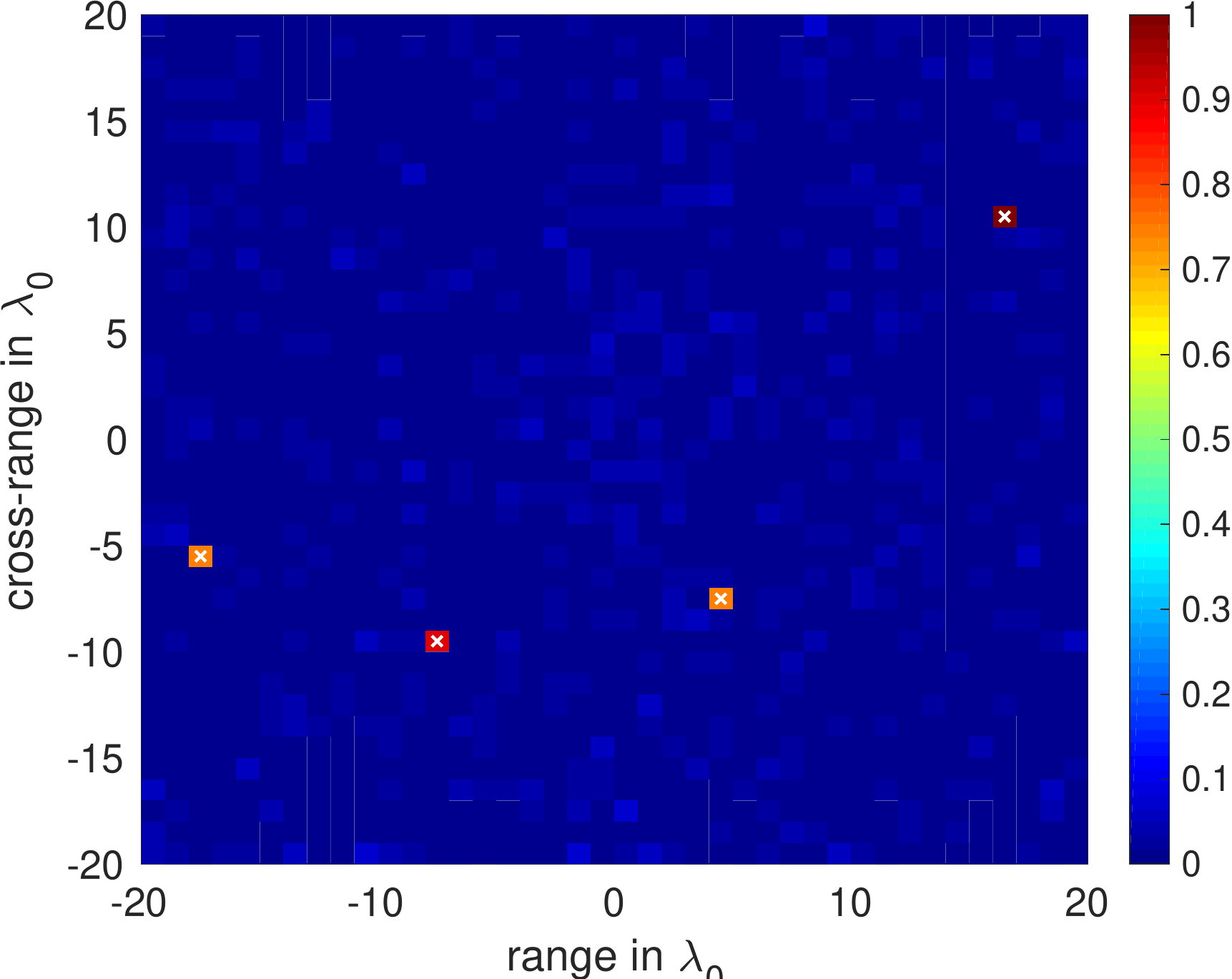}&
\includegraphics[scale=0.28]{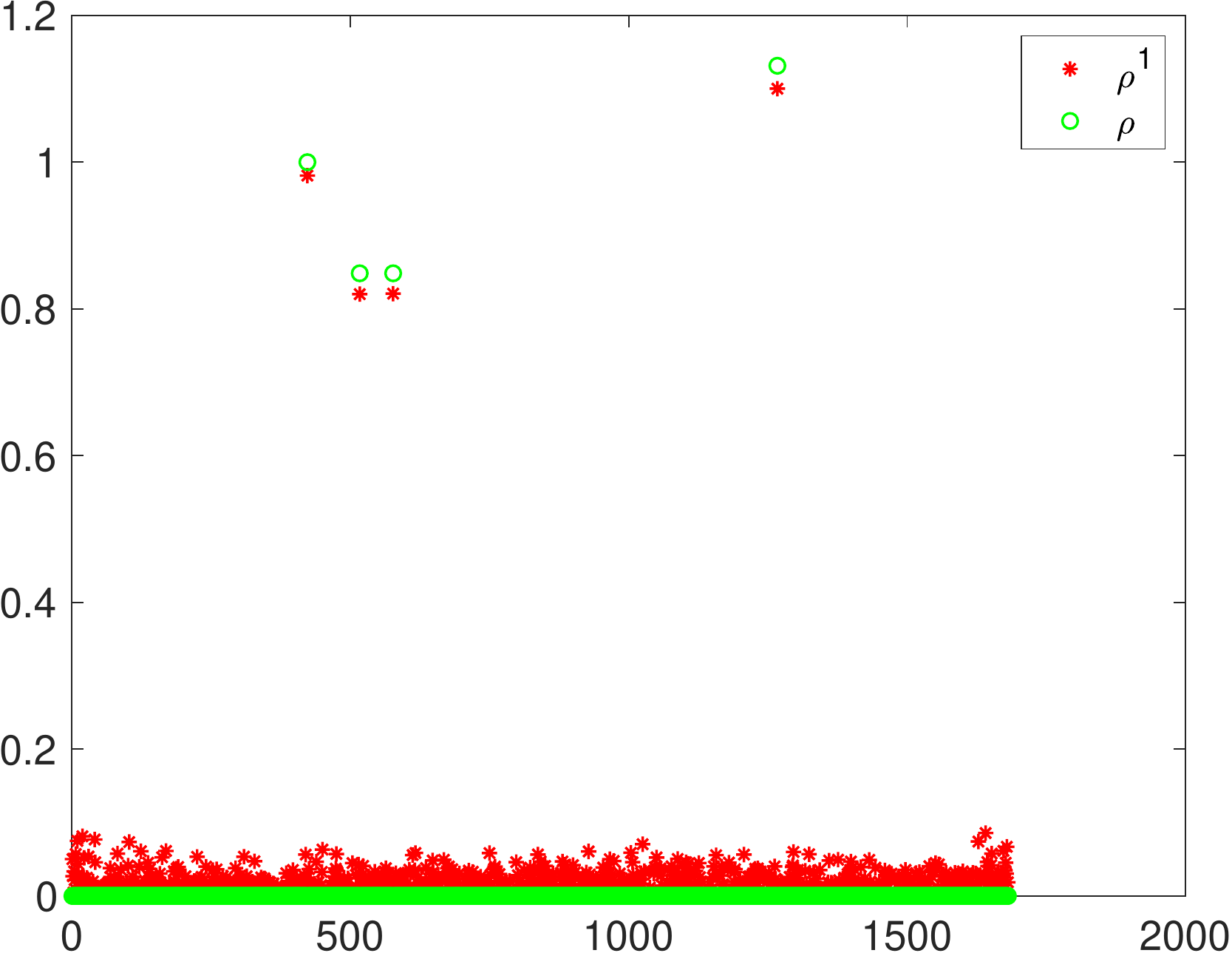}\\
\end{tabular}
\end{center}
\caption{Low resolution images with a moderate level of noise in the data so SNR~$=4$dB. $NS=625$ measurements. $K=1681$ pixels in the images.
}
\label{fig:noisec2}
\end{figure}

For the particular low imaging resolution configuration considered in Figure \ref{fig:noisec2} we obtain good results for a large noise level corresponding to SNR~$=0$dB; see
the top row of Figure \ref{fig:noisec3} where $N\,S=625$ as before. However, when we increase the number of measurements to $NS=1369$, the image obtained with $\ell_1$-norm minimization
turns out to be useless; see the bottom row of Figure \ref{fig:noisec3} . This illustrates the counter-intuitive fact that $\ell_1$-norm minimization does not always benefit from more data, at less if the data is highly contaminated with noise. 
This is so because the constant $\gamma$ in (\ref{def:gamma}) depends on the length of the data vector  $\vect b$ as $\sqrt{NS}$. 


As before, this problem can be fixed with the noise collector as it can be seen in Figure \ref{fig:noisec4}.
Again, the noise is effectively absorbed for both $NS=625$ (top row)  and $NS=1369$ (bottom row) measurements using a matrix collector with a {\em relatively small} number of columns, many less than $e^{NS}$ as Proposition  \ref{noise_c} suggests.
      
\begin{figure}[t]
\begin{center}
\begin{tabular}{cc}
\includegraphics[scale=0.28]{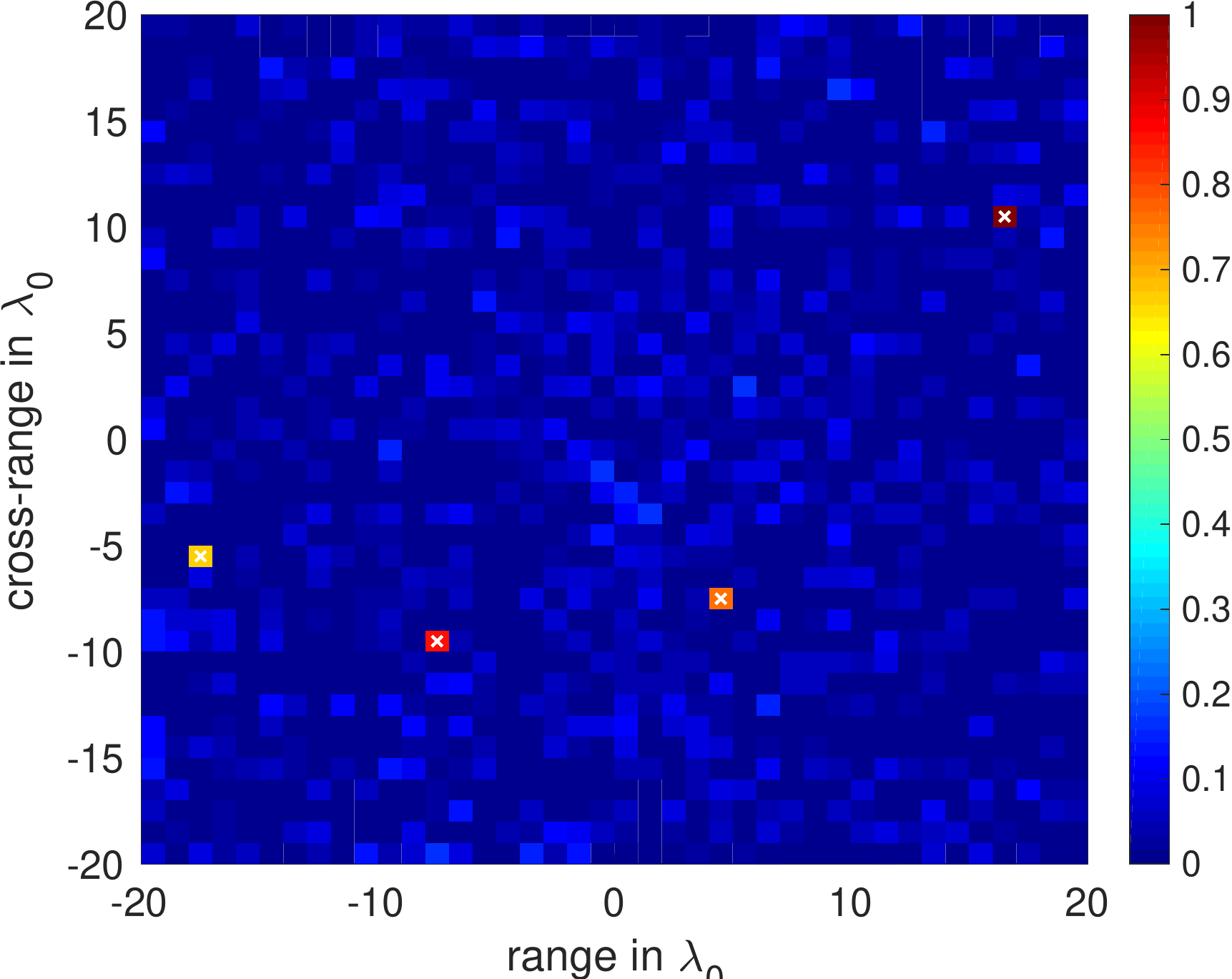}&
\includegraphics[scale=0.28]{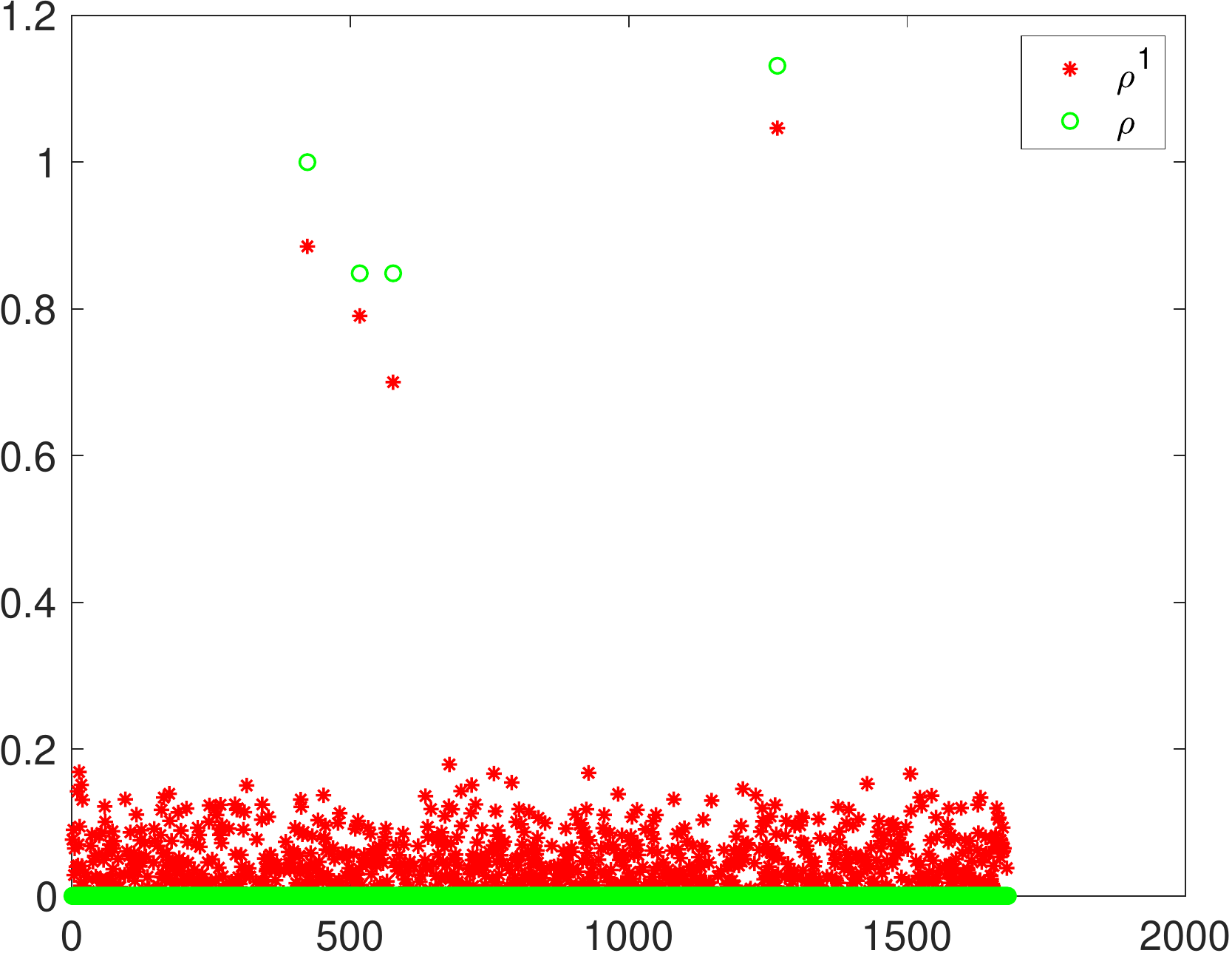}\\
\includegraphics[scale=0.28]{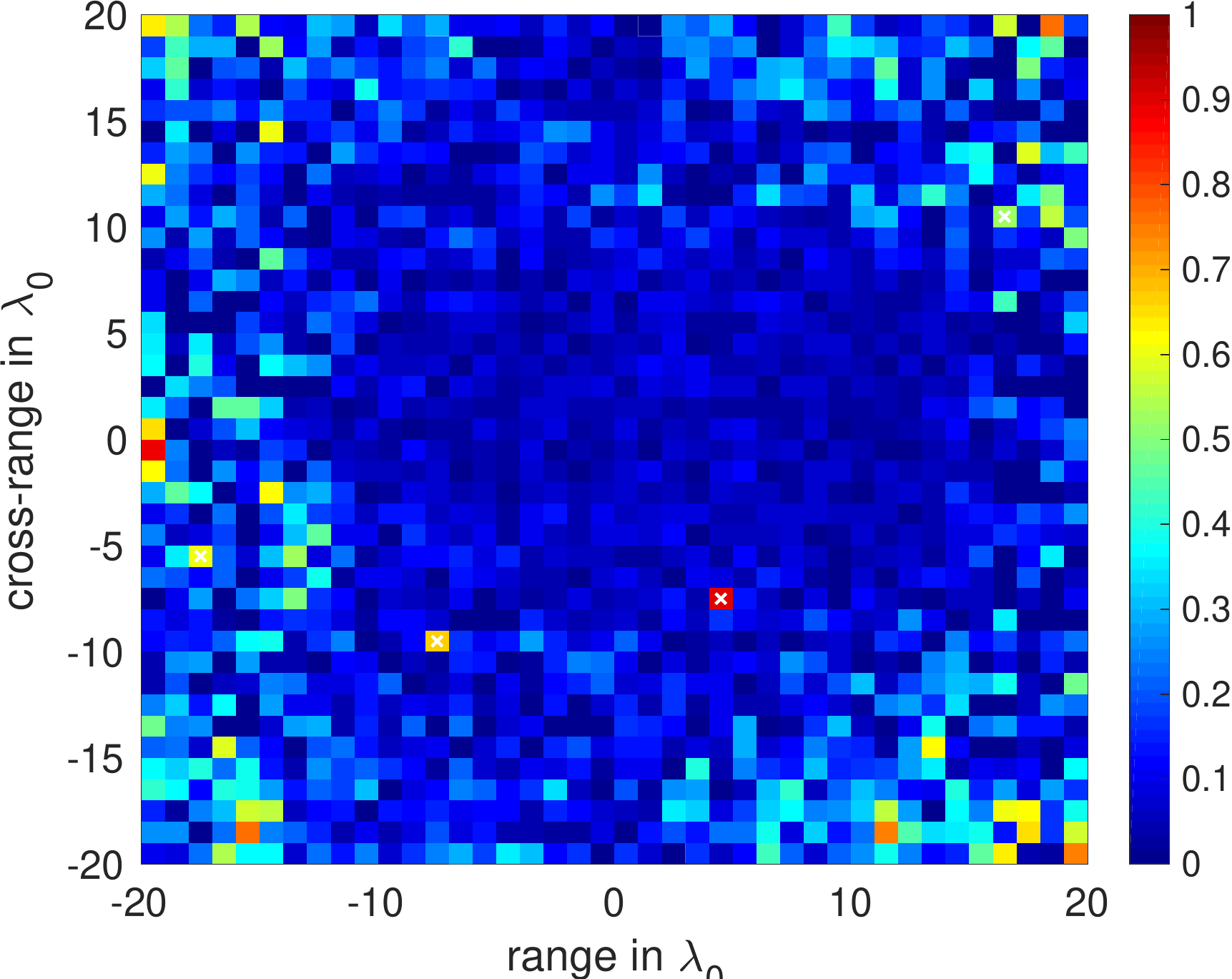}&
\includegraphics[scale=0.28]{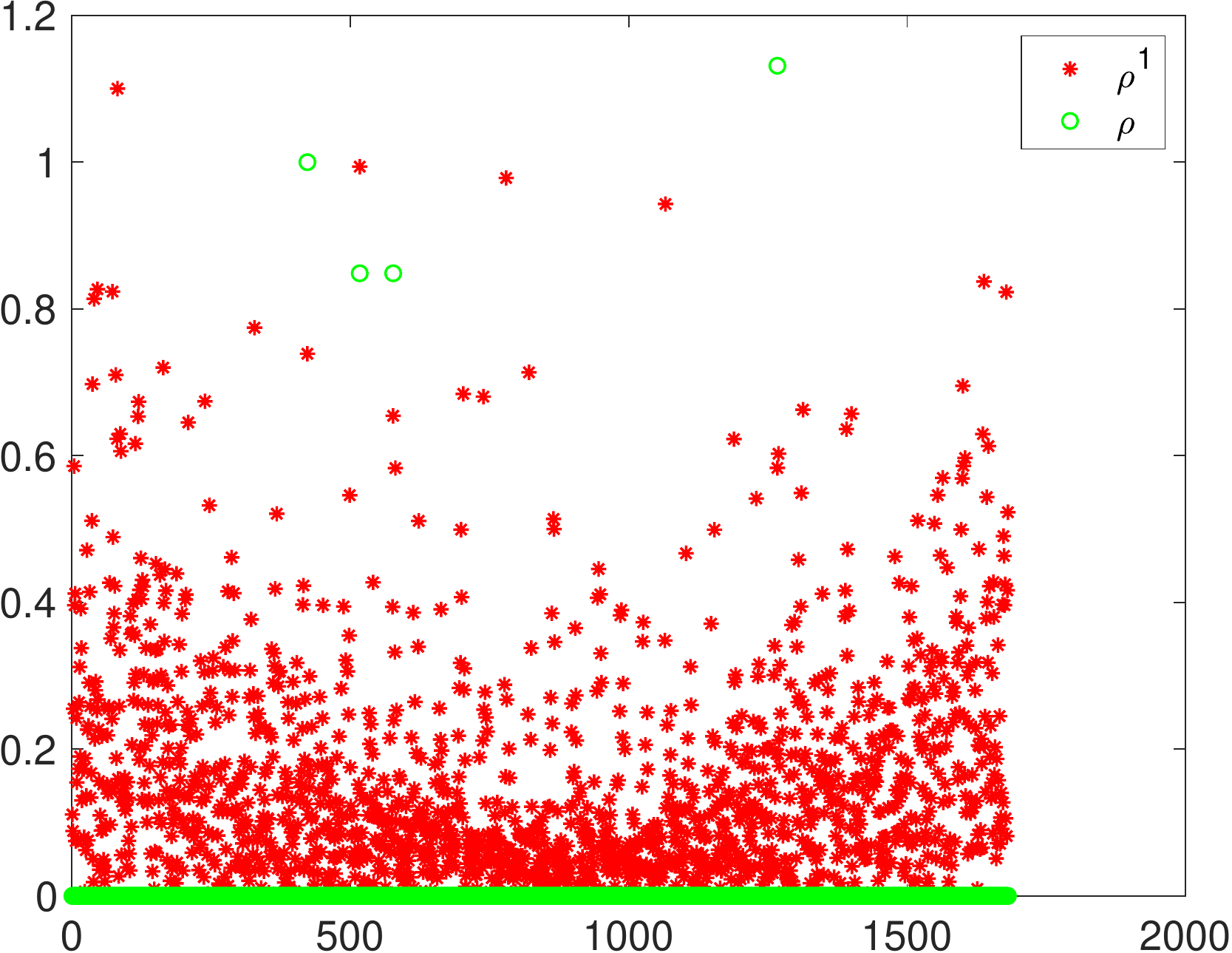}\\
\end{tabular}
\end{center}
\caption{Low resolution images with a high level of noise in the data so SNR~$=0$dB. Top row: $NS=625$ measurements. Bottom row: $NS=1369$ 
measurements. $K=1681$ pixels in the images.}
\label{fig:noisec3}
\end{figure}

\begin{figure}[htbp]
\begin{center}
\begin{tabular}{cc}
\includegraphics[scale=0.28]{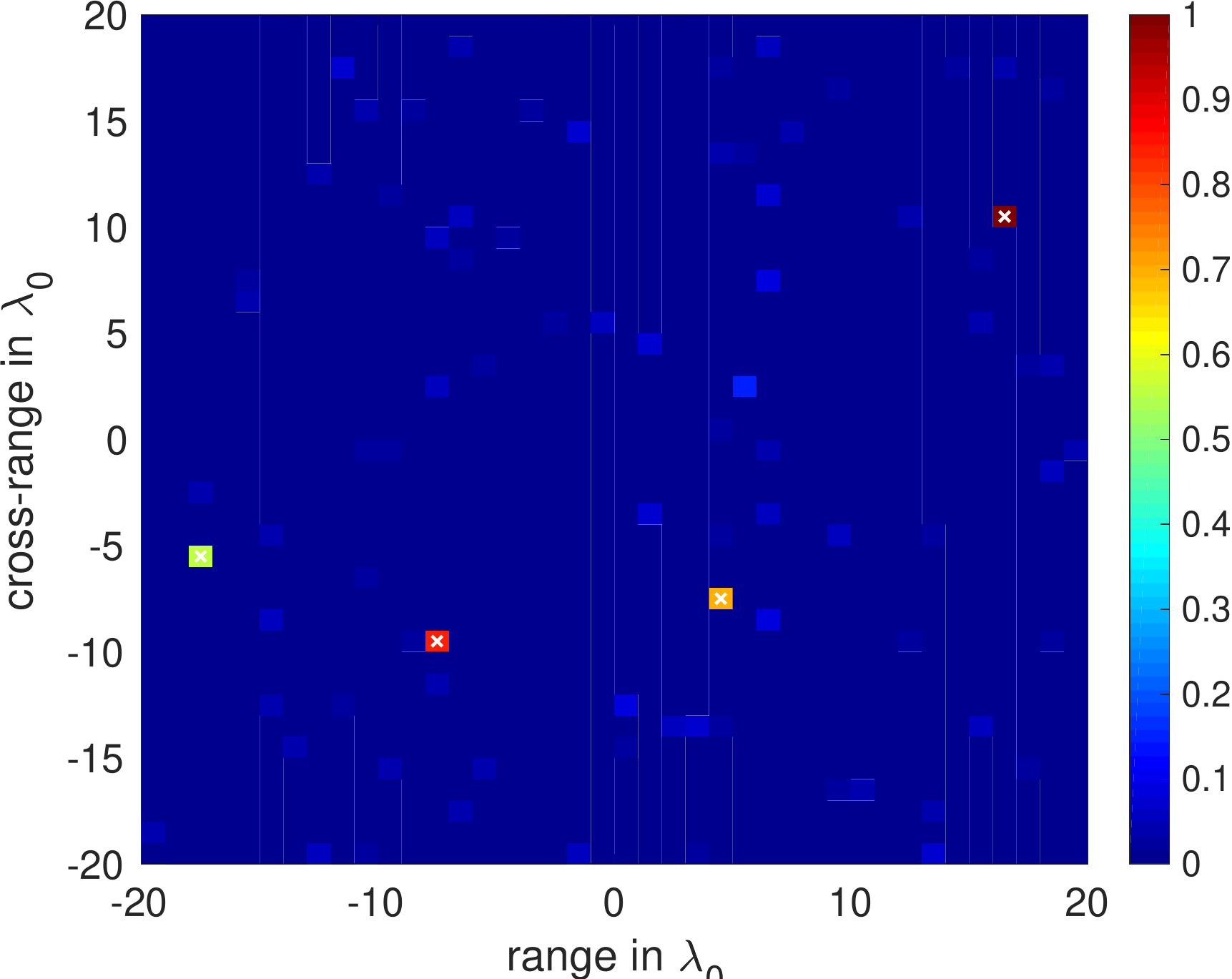}&
\includegraphics[scale=0.28]{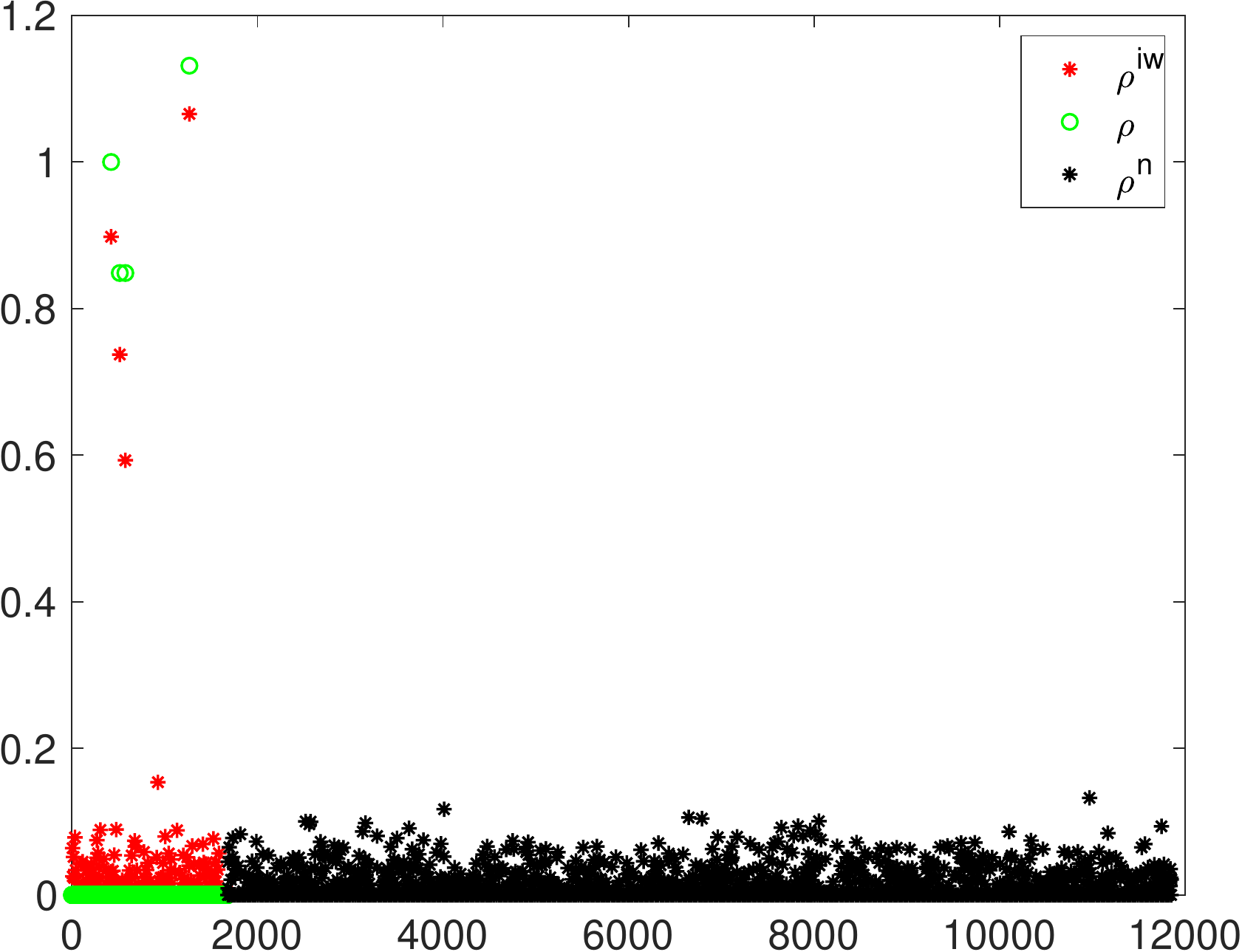}\\
\includegraphics[scale=0.28]{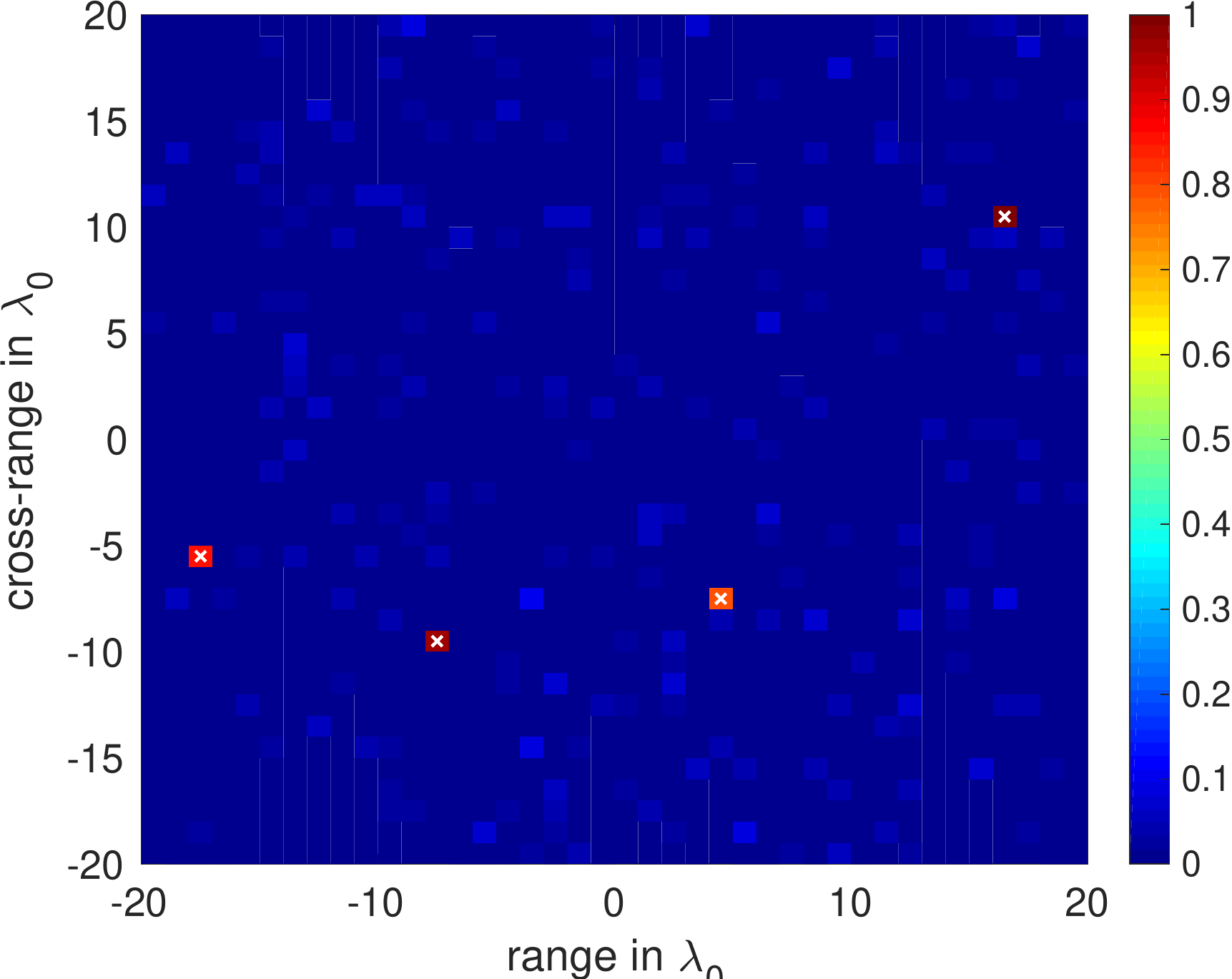}&
\includegraphics[scale=0.28]{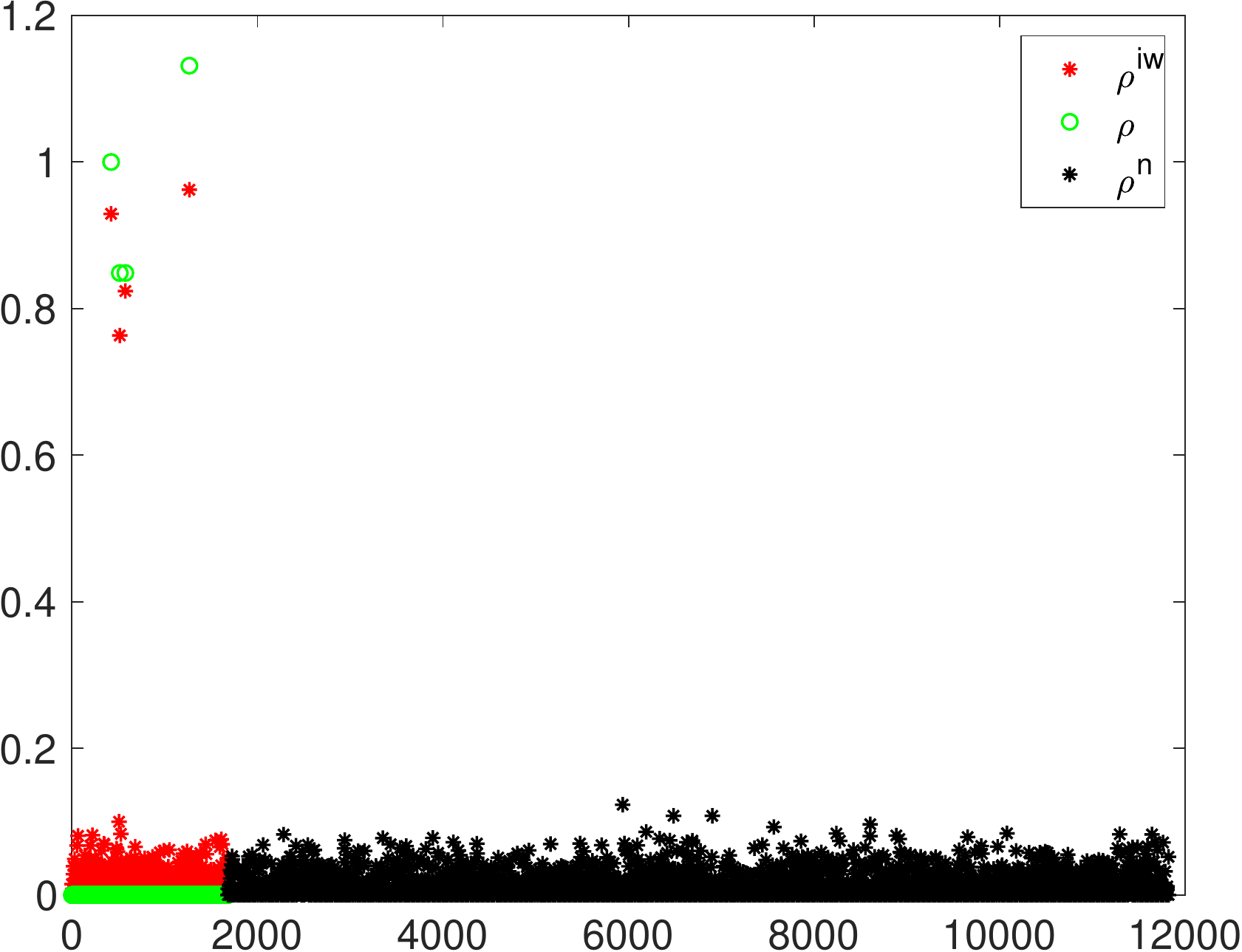}\\
\end{tabular}
\end{center}
\caption{Same as Fig. \ref{fig:noisec3} but with a noise collector matrix $\cC$ with $\Sigma=12000$ columns.}
\label{fig:noisec4}
\end{figure}

\begin{figure}[htbp]
\begin{center}
\begin{tabular}{ccc}
\includegraphics[scale=0.28]{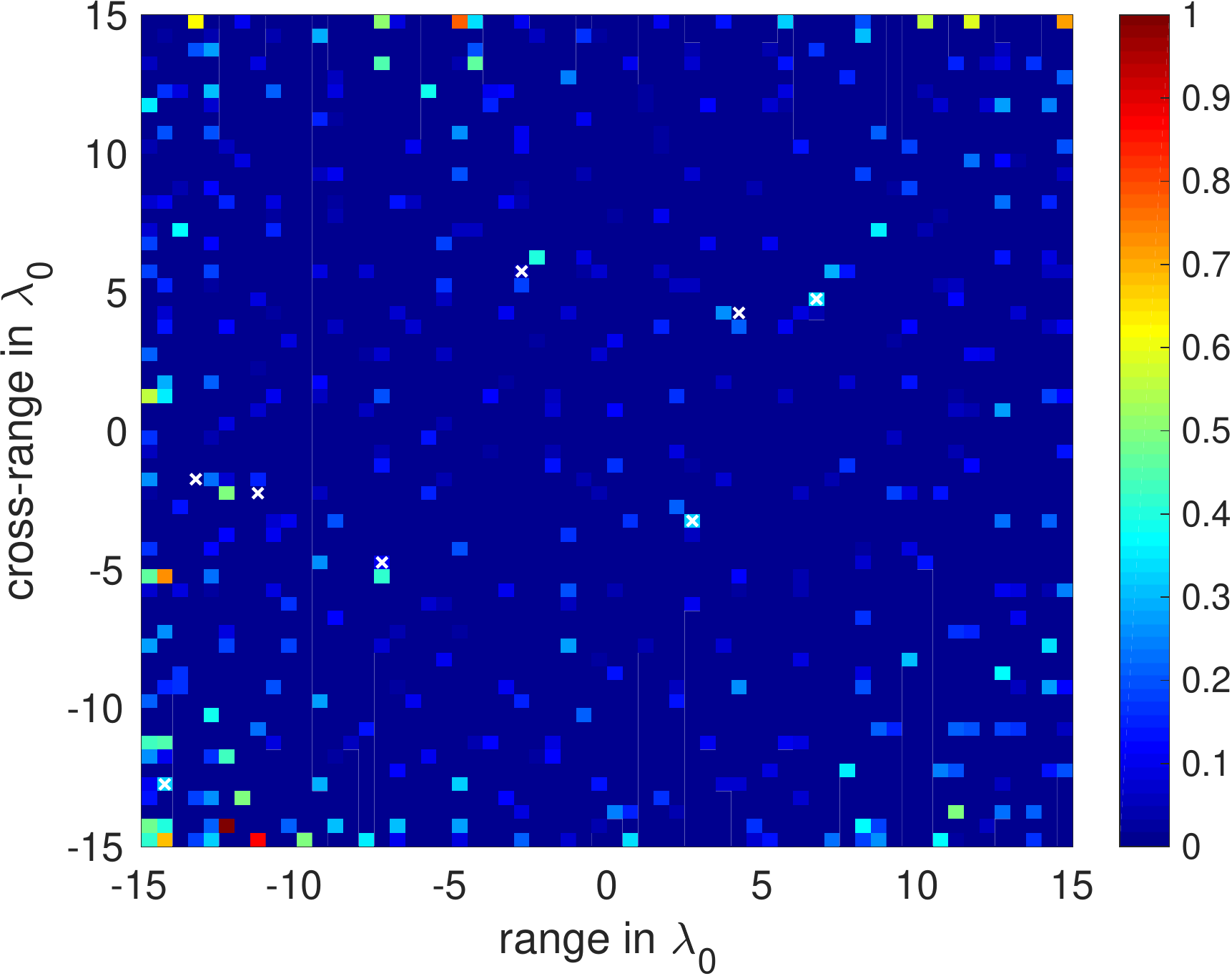}&
\includegraphics[scale=0.28]{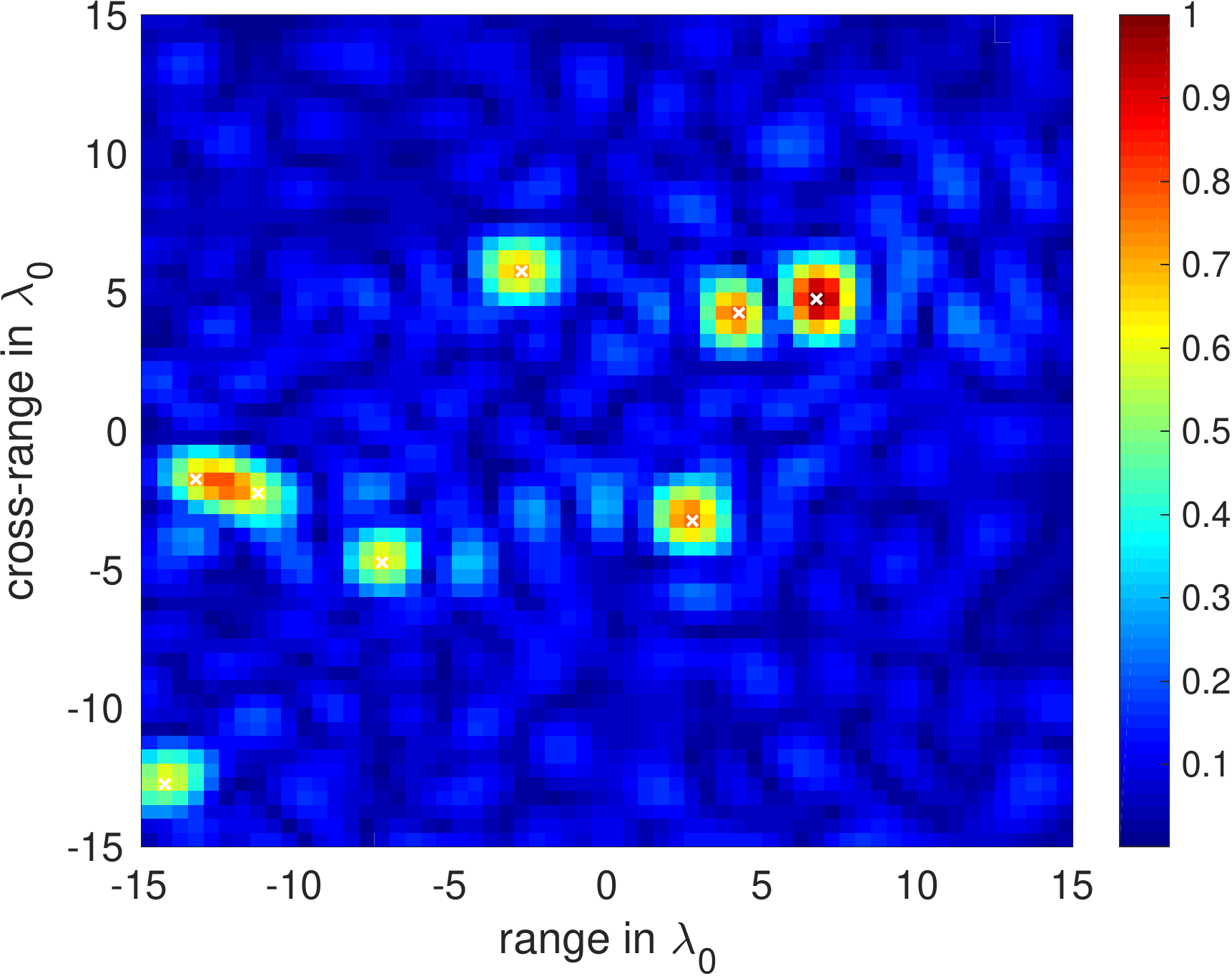}&
\includegraphics[scale=0.28]{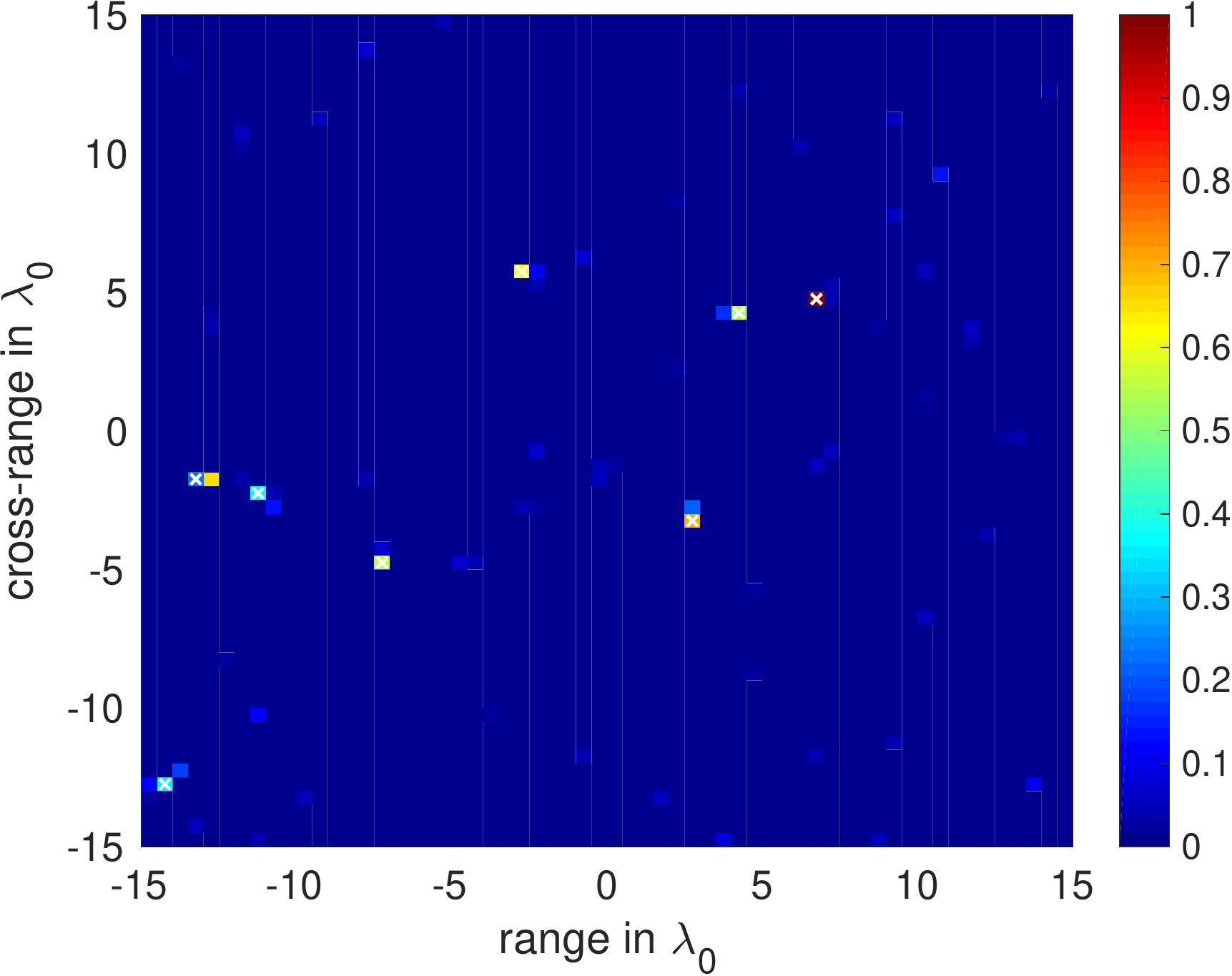}\\
\end{tabular}
\end{center}
\caption{
High resolution images  with a high level of noise in the $NS=625$ data, so SNR~$=0$dB. From left to right: plain $\vect \bfrho_{\ell_1}$ without noise collector,
$\vect \bfrho_{\ell_2}$, and
$\vect \bfrho_{\ell_1}$ using a noise collector. 
$K=3721$ pixels in the images.}
\label{fig:noisec5}
\end{figure}

We finish with one last example that shows that the use of the noise collector makes $\ell_1$-norm minimization competitive for imaging sparse scenes because it provides stable results with super-resolution even for highly corrupted data. We consider the example with $M=8$ sources and  SNR~$=0$dB. The array and the bandwidth are relatively large ($a/L=1/2$, $(2B)/\omega_0=1/2$), so the classical $\ell_2$-norm resolution is of the order $O(2\lambda_0)$, as in Figure (\ref{fig:noisec1}).
In Figure \ref{fig:noisec5} we show, from left to right, (i) the minimal $\ell_1$-norm solution without noise collector, which fails to give a good image,
(ii) the $\ell_2$-norm solution (\ref{sol_ell2}),  which is stable to additive noise but does not resolve nearby sources, and 
(iii) the minimal  $\ell_1$-norm solution with the noise collector, which provides a very precise and stable image. 

\section{Discussion}
\label{sec:conclusions}
In this paper, we consider imaging problems that can be formulated as underdetermined linear systems of the form $\Ac \, \rho_\delta =\bfb_\delta$, where $\Ac$ is an $N\times K$ model matrix with $N \ll K$, and  $\bfb_\delta$ is the $N$-dimensional data vector contaminated with noise. We assume that the solution is an M-sparse vector in $\mC^{K}$, corresponding to the $K$ pixels of the IW. 
We consider additive  noise in the data, so the data vector can be decomposed as $\bfb_\delta= \bfb + \delta \bfb$, where $\bfb$ is the data vector in the absence of noise and $\delta \bfb$ is 
the noise vector. 
We provide a theoretical framework that allows us to examine under what conditions the $\ell_1$-minimization problem admits a solution that is close to the exact one. We also
shown that, for our imaging problems,  $\ell_1$-minimization fails when the noise level is high and the dimension $N$ of the data vector $\bfb_\delta$ increases. The reason is that 
the error is proportional to the  square root of $N$. 

To alleviate this problem and increase the robustness of $\ell_1$-minimization, we propose a regularization strategy.
In particular, we seek the solution of $[\Ac \, | \,  \Cc] \, \bfrho_\delta=\bfb_\delta$, where the $N \times \Sigma$ matrix $\Cc$ is a noise collector. Thus,
the unknown $\rho_\delta$ is now a 
vector in $\mC^{K+\Sigma}$. The first $K$ components of the unknown correspond to the distribution of sources in the IW, while the $\Sigma$ next components do not correspond to any physical quantity. They are introduced to provide a fictitious source distribution given by an appropriate linear combination of the columns of $\Cc$ that 
produces a good approximation to $\delta \bfb$. The main idea is to create a library of noises. 
The columns of the noise collector matrix  are elements of this library and they are constructed so as to be incoherent with respect to the columns of $\Ac$. 
Theoretically, the dimension $\Sigma$ of the noise collector increases exponentially with $N$,
 which suggests that it may not be useful in practice. Our numerical results show, however,  robustness for 
$\ell_1$-minimization in the presence of noise  when a large enough number of columns $\Sigma \precsim  10 K$ is used to build the noise collector matrix. 

Our first findings on the noise collector are very encouraging. We have shown that its use improves dramatically the robustness of $\ell_1$-norm reconstructions 
when the data are corrupted with additive uncorrelated noise. Many other questions ought to be addressed. Some directions of our future research concern 
the following aspects: what happens with other types of noise?,
can  we design noise collectors adaptively depending on the noise in the data?, what if the noise comes from wave propagation in a random medium?,
can we design a noise collector for this case?, how much do we need to know about the noise so as to design a good noise collector?, can we retrieve 
this information from the data? Some of these questions will be addressed somewhere else.

\section*{Acknowledgments}
Part of this material is based upon work supported by the National Science Foundation under Grant No. DMS-1439786 while the authors were in 
residence at the Institute for Computational and Experimental Research in Mathematics (ICERM) in Providence, RI, during the Fall 2017 semester.
The work of M. Moscoso was partially supported by Spanish MICINN grant FIS2016-77892-R. The work of A.Novikov was partially supported by NSF grants  DMS-1515187, DMS-1813943.
The work of C. Tsogka was partially supported by AFOSR FA9550-17-1-0238.

\section*{References}

\appendix

\section{Proof of Proposition~\ref{Old_l1}}\label{uno}

We will now prove auxiliary lemmas that we will use in the proof of Propostion~\ref{Old_l1}.

\begin{lemma}~\label{on_e_v}
Let $\Bc$ be an $M \times M$ Hermitian matrix such that $b_{ii}=1$, and $|b_{ij}| \leq c$ for all $i \neq j$. Assume $(M-1)c <1$, then
any eigenvalue $\lambda$ of $\Bc$ satisfies
\begin{equation}\label{eig_v}
1-(M-1)c \leq \lambda \leq 1+(M-1)c.
\end{equation}
\end{lemma}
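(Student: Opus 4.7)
The plan is to apply the Gershgorin circle theorem, which is essentially tailored to this setup. Since $\Bc$ is Hermitian, all its eigenvalues are real, so the bounds in~(\ref{eig_v}) make sense as inequalities rather than modulus bounds.

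First I would recall Gershgorin: every eigenvalue $\lambda$ of $\Bc$ lies in the union $\bigcup_{i=1}^M D_i$, where the disc $D_i$ is centered at $b_{ii}$ with radius $R_i = \sum_{j \neq i} |b_{ij}|$. Under our hypotheses $b_{ii} = 1$ and $|b_{ij}| \leq c$ for $i \neq j$, so $R_i \leq (M-1)c$ for every $i$. Thus every eigenvalue satisfies $|\lambda - 1| \leq (M-1)c$, and since $\lambda$ is real by Hermiticity, this is exactly
\[
1 - (M-1)c \leq \lambda \leq 1 + (M-1)c,
\]
which is~(\ref{eig_v}). The hypothesis $(M-1)c < 1$ is not even needed for the bound itself, but it guarantees that the lower bound is positive, which is presumably what will be used later (e.g.\ to conclude invertibility of submatrices of $\Ac^*\Ac$ restricted to the support $T$).

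For completeness, if one prefers not to invoke Gershgorin as a black box, I would give a three-line self-contained argument: let $\vect v$ be a unit eigenvector with eigenvalue $\lambda$ and let $i$ be an index maximizing $|v_i|$. Then $\lambda v_i = \sum_j b_{ij} v_j = v_i + \sum_{j \neq i} b_{ij} v_j$, so $|\lambda - 1| \, |v_i| \leq \sum_{j \neq i} |b_{ij}| \, |v_j| \leq (M-1)c \, |v_i|$, and dividing by $|v_i| > 0$ yields $|\lambda - 1| \leq (M-1)c$. Combined with $\lambda \in \mathbb{R}$, this is the claim.

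There is no real obstacle here; the lemma is essentially a one-step application of a classical estimate, and the only subtlety is remembering to use Hermiticity to pass from $|\lambda - 1| \leq (M-1)c$ to the two-sided real inequality in~(\ref{eig_v}). The lemma will presumably be used in the proof of Proposition~\ref{Old_l1} to bound the smallest and largest singular values of the Gram matrix of the columns of $\Ac$ indexed by $T$ (or by vicinities), which allows one to invert it stably with constants depending only on $M$ and the coherence threshold $1/(3M)$ used in Definition~\ref{DeCo}.
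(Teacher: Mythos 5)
Your proposal is correct and is essentially the paper's own argument: the paper's proof is precisely the hand-rolled Gershgorin estimate via the triangle inequality applied to the row of the (sup-normalized) eigenvector with maximal component, which is the ``three-line self-contained argument'' you give as an alternative to citing Gershgorin as a black box. You are also right that $(M-1)c<1$ is not needed for the eigenvalue bound itself but only to ensure positivity of the lower bound, which is what Lemma~\ref{l1l1_a} uses to bound the largest eigenvalue of $\left(\Ac_T^*\Ac_T\right)^{-1}$ by $3/2$.
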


\begin{proof}
Suppose $\Bc \vect \rho = \lambda \vect \rho$. By the triangle inequality for any row $i$ we have 
\[
|\rho_i| - \left| \sum_{j \neq i} b_{i j} \rho_j \right| \leq |\lambda  \rho_i | \leq |\rho_i| + \left| \sum_{j \neq i} b_{i j} \rho_j  \right|. 
\]
Since 
$\left| \sum_{j \neq i} b_{i j} \rho_j \right|  \leq (M-1) c$, we obtain~(\ref{eig_v}). 

\end{proof}

\begin{lemma}\label{l1l1} 
Suppose $\gamma$ is defined by~(\ref{def:gamma}). Let  $\vect \rho_1$ and   $\vect \rho_2$ be minimizers of
$\|\vect \eta\|_{\ell_1}$,  subject to $\Ac \vect \eta= \vect b_1$ and $\Ac \vect \eta= \vect b_2$, respectively.
Then, there exists $\xi$ such that  $\Ac \vect \xi= \vect b_1$, 
\begin{equation}\label{xi1}
 \| \vect \xi \|_{\ell_1}  \leq   \| \vect \rho_1 \|_{\ell_1}  +  2 \gamma  \| \vect b_1 - \vect b_2  \|_{\ell_2},
\end{equation}
and 
\begin{equation}\label{l1_l1_a}
    \| \vect \xi - \vect \rho_2 \|_{\ell_1}  \leq  \gamma \| \vect b_1 - \vect b_2  \|_{\ell_2}.
    \end{equation}
\end{lemma}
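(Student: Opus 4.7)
The plan is to construct the desired $\vect \xi$ by perturbing $\vect \rho_2$ by a small correction that moves the data from $\vect b_2$ back to $\vect b_1$, where ``small'' is quantified using $\gamma$.

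First I would invoke the definition~(\ref{def:gamma}) of $\gamma$ with the input vector $\vect c := \vect b_1 - \vect b_2$: this yields a minimal $\ell_1$-norm solution $\vect \eta$ of $\cA \vect \eta = \vect b_1 - \vect b_2$ satisfying
$$\|\vect \eta\|_{\ell_1} \leq \gamma \|\vect b_1 - \vect b_2\|_{\ell_2}.$$
Then I set $\vect \xi := \vect \rho_2 + \vect \eta$. Linearity of $\cA$ immediately gives $\cA \vect \xi = \vect b_2 + (\vect b_1 - \vect b_2) = \vect b_1$, so $\vect \xi$ is feasible for the $\vect b_1$-constrained problem. The second bound~(\ref{l1_l1_a}) is then completely free, since $\|\vect \xi - \vect \rho_2\|_{\ell_1} = \|\vect \eta\|_{\ell_1} \leq \gamma \|\vect b_1 - \vect b_2\|_{\ell_2}$.

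For~(\ref{xi1}) the only non-mechanical ingredient is a cross-optimality comparison: the vector $\vect \rho_1 - \vect \eta$ satisfies $\cA(\vect \rho_1 - \vect \eta) = \vect b_1 - (\vect b_1 - \vect b_2) = \vect b_2$, hence it is feasible for the optimization problem whose minimizer is $\vect \rho_2$. By optimality of $\vect \rho_2$ and the triangle inequality,
$$\|\vect \rho_2\|_{\ell_1} \leq \|\vect \rho_1 - \vect \eta\|_{\ell_1} \leq \|\vect \rho_1\|_{\ell_1} + \|\vect \eta\|_{\ell_1}.$$
Combining this with the triangle inequality $\|\vect \xi\|_{\ell_1} \leq \|\vect \rho_2\|_{\ell_1} + \|\vect \eta\|_{\ell_1}$ and invoking the $\gamma$-bound on $\|\vect \eta\|_{\ell_1}$ twice delivers the factor $2\gamma$ in~(\ref{xi1}).

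The argument is elementary; there is no real obstacle to speak of. The only conceptual subtlety is that we cannot directly bound $\|\vect \xi\|_{\ell_1}$ in terms of $\|\vect \rho_1\|_{\ell_1}$ by a single triangle inequality, because $\vect \xi$ is defined using $\vect \rho_2$ rather than $\vect \rho_1$. The cross-optimality step that compares $\vect \rho_2$ with the feasible competitor $\vect \rho_1 - \vect \eta$ is what bridges the two minimization problems and accounts for the extra factor of $\gamma$ on the right-hand side of~(\ref{xi1}).
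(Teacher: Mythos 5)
Your proposal is correct and follows essentially the same route as the paper: the same construction $\vect \xi = \vect \rho_2 + \vect \eta$ with $\vect \eta$ the minimal $\ell_1$-norm solution of $\cA \vect \eta = \vect b_1 - \vect b_2$, and the same cross-optimality comparison, which the paper merely packages as the two-sided bound $\left| \| \vect \rho_1 \|_{\ell_1} - \| \vect \rho_2 \|_{\ell_1} \right| \leq \gamma \| \vect b_1 - \vect b_2 \|_{\ell_2}$ before applying the direction you derive directly.
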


\begin{proof}
Let us first show that
\begin{equation}\label{l1_l1}
| \| \vect \rho_1 \|_{\ell_1} - \| \vect \rho_2 \|_{\ell_1} | \leq \gamma \| \vect b_1 - \vect b_2  \|_{\ell_2}. 
\end{equation}
Assume, for definiteness, that $\| \vect \rho_1 \|_{\ell_1} > \| \vect \rho_2 \|_{\ell_1}$. Then,
\[
| \| \vect \rho_1 \|_{\ell_1} - \| \vect \rho_2 \|_{\ell_1} | =  \| \vect \rho_1 \|_{\ell_1} - \| \vect \rho_2 \|_{\ell_1}.
\]
Suppose $\vect \rho_3$ is a minimizer of
$\|\vect \eta\|_{\ell_1}$,  subject to $\Ac \vect \eta= \vect b_1 - \vect b_2$. Since 
$\Ac (\vect \rho_2 + \vect \rho_3) = \vect b_1 $, and $\vect \rho_1$ is a minimizer of
$\|\vect \eta\|_{\ell_1}$,  subject to $\Ac \vect \eta= \vect b_1$, it follows
$ \| \vect \rho_1 \|_{\ell_1}  \leq \| \vect \rho_2 + \vect \rho_3 \|_{\ell_1}$. By~(\ref{def:gamma}) and the triangle inequality
\[
 \| \vect \rho_2 + \vect \rho_3 \|_{\ell_1} \leq  \| \vect \rho_2 \|_{\ell_1} +  \gamma \| \vect b_1 - \vect b_2  \|_{\ell_2}. 
\]
Thus, ~(\ref{l1_l1}) holds.

Let $\xi =\vect \rho_2 + \vect \rho_3$, where $\vect \rho_3$ is a minimizer of
$\|\vect \eta\|_{\ell_1}$,  subject to $\Ac \vect \eta= \vect b_1 - \vect b_2$.  Then, $\Ac \vect \xi= \vect b_1$, and inequality~(\ref{l1_l1_a}) follows from~(\ref{def:gamma}).
Using~(\ref{l1_l1}),~(\ref{l1_l1_a}), and the triangle inequality we obtain
\[
\| \vect \xi \|_{\ell_1}  \leq  \| \vect \rho_2  \|_{\ell_1}+ \| \vect \xi - \vect \rho_2 \|_{\ell_1} \leq  \| \vect \rho_1 \|_{\ell_1}  +  2 \gamma  \| \vect b_1 - \vect b_2  \|_{\ell_2}.
\]
\end{proof}

\begin{lemma}\label{l1l1_a}
Suppose $\Ac \vect \rho = \Ac \vect \xi = \vect b$, where $\vect \rho$ is $M$-sparse, and $\vect \xi$ is arbitrary. Assume vicinities (\ref{vicinity}) do not overlap. Then,
\begin{equation}\label{co_inn}
 {\bf Co}(\vect \rho, \vect \xi)  \leq \frac{1}{2}  {\bf In}(\vect \rho, \vect \xi).
\end{equation}
In particular,
\begin{equation}\label{exact0}
 \| \vect \rho \|_{\ell_1} \leq  \| \vect \xi \|_{\ell_1}. 
\end{equation}
\end{lemma}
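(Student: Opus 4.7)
The plan is to set $\vect \eta = \vect \xi - \vect \rho$, so that $\Ac\,\vect \eta = \vect 0$. Because the vicinities do not overlap, $T \cap S_j = \{j\}$ for every $j \in T$, and the identity $\langle \vect a_j, \vect a_j\rangle = 1$ then gives $\sum_{k \in S_j}\langle \vect a_j, \vect a_k\rangle \rho_k = \rho_j$. Consequently the $j$-th term of ${\bf Co}(\vect \rho, \vect \xi)$ simplifies to $-\sum_{k \in S_j}\langle \vect a_j, \vect a_k\rangle \eta_k$. Projecting the identity $\Ac\,\vect \eta = \vect 0$ onto $\vect a_j$ gives $\sum_k \langle \vect a_j, \vect a_k\rangle \eta_k = 0$, which lets one flip the sum:
\[
\rho_j - \sum_{k \in S_j}\langle \vect a_j, \vect a_k\rangle \xi_k \;=\; \sum_{k \notin S_j}\langle \vect a_j, \vect a_k\rangle \eta_k ,
\]
and on the right-hand side every index $k$ satisfies $|\langle \vect a_j, \vect a_k\rangle| < 1/(3M)$ by the definition of vicinity.

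Setting $R_j = \sum_{k \in S_j}\langle \vect a_j, \vect a_k\rangle \xi_k$, the previous identity yields $|\rho_j - R_j| \leq (3M)^{-1}\sum_{k \notin S_j}|\eta_k|$. Summing over $j \in T$ and using that each $k \notin \Upsilon$ lies outside every $S_j$ (so is counted $M$ times) whereas each $k \in \Upsilon$ lies in exactly one $S_j$ by non-overlap (so is counted $M-1$ times) produces the intermediate bound
\[
{\bf Co}(\vect \rho, \vect \xi) \;\leq\; \tfrac{1}{3}\,{\bf In}(\vect \rho, \vect \xi) \;+\; \tfrac{M-1}{3M}\sum_{k \in \Upsilon}|\eta_k|.
\]
The main obstacle is to control $\sum_{k \in \Upsilon}|\eta_k|$ by ${\bf In}(\vect \rho, \vect \xi) = \sum_{k \notin \Upsilon}|\eta_k|$ tightly enough to upgrade the constant to $\tfrac{1}{2}$. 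The natural tool is the $M \times M$ Gram matrix $B = (\langle \vect a_j, \vect a_{j'}\rangle)_{j, j' \in T}$, whose off-diagonal entries are bounded by $1/(3M)$ because the vicinities do not overlap; Lemma~\ref{on_e_v} then places its eigenvalues in $[2/3, 4/3]$, so $B$ is invertible with $\|B^{-1}\|_{\ell_2} \leq 3/2$. Projecting $\Ac\,\vect \eta = 0$ onto the rows indexed by $T$ gives the identity $B\,\vect \eta|_T = -(\Ac|_T)^*\Ac|_{T^c}\vect \eta|_{T^c}$, which provides quantitative control of the entries $\eta_j$ for $j \in T$ in terms of $\sum_{k \notin T}|\eta_k|$; propagating this bound to the remaining indices in $\Upsilon \setminus T$, where central vectors of neighbouring vicinities contribute only through inner products bounded by $1/(3M)$, is the technically hardest step, and is where the constants must be tracked with care.

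Once (\ref{co_inn}) is established, (\ref{exact0}) follows quickly. The reverse triangle inequality gives $|\rho_j - R_j| \geq |\rho_j| - |R_j|$; summing over $j \in T$ and using $\sum_{j \in T}|R_j| \leq \sum_{j \in T}\sum_{k \in S_j}|\xi_k| = \sum_{k \in \Upsilon}|\xi_k|$ (applying $|\langle \vect a_j, \vect a_k\rangle| \leq 1$ together with the non-overlap of vicinities) yields ${\bf Co}(\vect \rho, \vect \xi) \geq \|\vect \rho\|_{\ell_1} - \sum_{k \in \Upsilon}|\xi_k|$. Substituting (\ref{co_inn}) and writing $\|\vect \xi\|_{\ell_1} = \sum_{k \in \Upsilon}|\xi_k| + {\bf In}(\vect \rho, \vect \xi)$ then gives $\|\vect \rho\|_{\ell_1} \leq \|\vect \xi\|_{\ell_1} - \tfrac{1}{2}{\bf In}(\vect \rho, \vect \xi) \leq \|\vect \xi\|_{\ell_1}$, which is~(\ref{exact0}).
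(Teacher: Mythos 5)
Your reduction to $\vect\eta=\vect\xi-\vect\rho$, the identity $\rho_j-\sum_{k\in S_j}\langle \vect a_j,\vect a_k\rangle\xi_k=\sum_{k\notin S_j}\langle \vect a_j,\vect a_k\rangle\eta_k$, the counting that yields ${\bf Co}(\vect\rho,\vect\xi)\leq\tfrac13{\bf In}(\vect\rho,\vect\xi)+\tfrac{M-1}{3M}\sum_{k\in\Upsilon}|\eta_k|$, and your derivation of~(\ref{exact0}) from~(\ref{co_inn}) are all correct, and the last of these matches the paper. But the central inequality~(\ref{co_inn}) is not proved: the term $\tfrac{M-1}{3M}\sum_{k\in\Upsilon}|\eta_k|$ is left uncontrolled, and you only sketch a plan for handling it. That plan cannot succeed. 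No bound of the form $\sum_{k\in\Upsilon}|\eta_k|\leq C\,{\bf In}(\vect\rho,\vect\xi)$ follows from the hypotheses: the columns inside a single vicinity are allowed to be nearly or exactly collinear (this is the whole point of the vicinity framework), so $\Ac$ may have a nonzero kernel vector $\vect\eta$ supported entirely in one $S_{j_0}$, for which ${\bf In}(\vect\rho,\vect\xi)=0$ while $\sum_{k\in\Upsilon}|\eta_k|>0$. The lemma still holds in that case only because of cancellation, $\sum_{k\in S_{j_0}}\langle \vect a_j,\vect a_k\rangle\eta_k=\langle \vect a_j,\Ac\vect\eta\rangle=0$, and your intermediate bound destroys precisely this cancellation by passing to absolute values termwise over $\Upsilon\setminus S_j$.

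The missing idea, which is what the paper's proof uses, is to test the relation $\Ac(\vect\rho-\vect\xi)=0$ not against the individual columns $\vect a_j$ but against vectors of the form $\Ac_T\left(\Ac_T^*\Ac_T\right)^{-1}\vect\mu$ built from the dual system of $\{\vect a_j\}_{j\in T}$. Non-overlap of the vicinities forces $j'\notin S_j$ for distinct $j,j'\in T$, so the off-diagonal entries of $\Ac_T^*\Ac_T$ are below $1/(3M)$ and Lemma~\ref{on_e_v} bounds the spectrum of $\left(\Ac_T^*\Ac_T\right)^{-1}$ by $3/2$. Choosing $\vect\mu$ with unimodular entries aligned with the terms of ${\bf Co}$ then produces a right-hand side in which only the indices $k\notin\Upsilon$ survive, giving ${\bf Co}(\vect\rho,\vect\xi)\leq\tfrac32\cdot\tfrac1{3M}\cdot M\,{\bf In}(\vect\rho,\vect\xi)=\tfrac12{\bf In}(\vect\rho,\vect\xi)$. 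Your own invocation of the Gram matrix points in the opposite direction (trying to bound the in-vicinity mass of $\vect\eta$ by its out-of-vicinity mass), which is exactly where the argument stalls.
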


\begin{proof}
For any $\mu \in \mathbb{C}^M$, we have
\[
0=\langle \Ac_T  \left( \Ac^*_T \Ac_T \right)^{-1} \vect \mu,  \Ac ( \vect \rho - \vect  \xi) \rangle = \langle  \vect \mu ,    \left( \Ac^*_T \Ac_T \right)^{-1} \Ac^*_T \Ac ( \vect \rho - \vect  \xi) \rangle
\]
since $0= \Ac ( \vect \rho - \vect  \xi)$.
By Lemma~\ref{on_e_v}, the largest eigenvalue of $\left(\Ac^*_T \Ac_T \right)^{-1}$ is smaller than $3/2$. Thus,
\[
\left| \sum_{j \in T}  \bar{\mu}_j \rho_j - \sum_{j \in T} \sum_{k \in S_j}  \langle \vect a_j, \vect a_k \rangle \bar{\mu}_j \xi_k \right| \leq \frac{3}{2}
 \sum_{j \in T} 
\sum_{k  \not\in \Upsilon} \left| \langle \vect a_j, \vect a_k \rangle \bar{\mu}_j \xi_k \right|, \, \Upsilon = \cup_{j \in T} S_j.
\]
Choose $ \mu_j$,  so that $|\mu_j|=1$ and
\[
\left| \sum_{j \in T}  \bar{\mu}_j \rho_j - \sum_{j \in T} \sum_{k \in S_j}  \langle \vect a_j, \vect a_k \rangle \bar{\mu}_j \xi_k \right| ={\bf Co}(\vect \rho, \vect \xi).
\]
We can estimate
\[
{\bf Co}(\vect \rho, \vect \xi) \leq \frac{3}{2}  \frac{1}{3 M}   \sum_{j \in T} \sum_{k  \not\in \Upsilon} \left| \xi_k \right|  \leq \frac{1}{2}  \sum_{k  \not\in \Upsilon} \left| \xi_k \right|. 
\]
which is equivalent to~(\ref{co_inn}). Observe that (see~(\ref{co_}))
\[
 \| \vect \rho \|_{\ell_1} -  \sum_{k  \in \Upsilon} \left| \xi_k \right| \leq {\bf Co}(\vect \rho, \vect \xi). 
\]

\end{proof}

\begin{proof}[Proof of Proposition~\ref{Old_l1}]
If  $\vect \rho$ and   $\vect \rho_\delta$ are minimizers of
$\|\vect \eta\|_{\ell_1}$,  subject to $\Ac \vect \eta= \vect b$ and $\Ac \vect \eta= \vect b_\delta$, respectively, we can apply
 Lemma~\ref{l1l1}  and conclude
there exists  $\xi$ such that  $\Ac \vect \xi= \vect b$, 
\begin{equation}\label{xi1_proof_}
 \| \vect \xi \|_{\ell_1}  \leq   \| \vect \rho \|_{\ell_1}  + 2 \gamma \delta,
\end{equation}
and 
\begin{equation}\label{l1_l1_a_proof_}
    \| \vect \xi - \vect \rho_\delta \|_{\ell_1}  \leq \gamma \delta.
    \end{equation}
Since
 \[
 \| \vect \rho \|_{\ell_1}  \leq  {\bf Co}(\vect \rho, \vect \xi)   + \sum_{j \in T}  |\xi_j |,
 \]
  by Lemma~\ref{l1l1_a} we have 
\begin{equation}\label{co_in_}
  \| \vect \rho \|_{\ell_1} \leq  \frac{1}{2}  \sum_{j  \not\in T} \left| \xi_j \right| +  \sum_{j \in T}  |\xi_j | = \| \vect{\xi} \|_{\ell_1} -  \frac{1}{2}  {\bf In}(\vect \rho, \vect \xi).
 \end{equation}

 Comparing~(\ref{co_in_}) and~(\ref{xi1_proof_}) we conclude
\begin{equation}\label{in_easy_}
{\bf In}(\vect \rho, \vect \xi) \leq   4 \gamma \delta.
\end{equation}
By the triangle inequality and~(\ref{l1_l1_a_proof_}), we have 
\[
{\bf In}(\vect \rho, \vect \rho_\delta) \leq {\bf In}(\vect \rho, \vect \xi) +    \| \vect \xi - \vect \rho_\delta \|_{\ell_1}  \leq  5
\gamma \delta.
\]
 Hence, we have obtained~(\ref{est_in-}). From~(\ref{co_inn})  and~(\ref{in_easy_}), we obtain
 \[
 {\bf Co}(\vect \rho, \vect \xi)  \leq 2 \gamma \delta.
 \]
By the triangle inequality and~(\ref{l1_l1_a_proof_}), we have 
\[
{\bf Co}(\vect \rho, \vect \rho_\delta) \leq {\bf Co}(\vect \rho, \vect \xi) +    \| \vect \xi - \vect \rho_\delta \|_{\ell_1}  \leq 
 3 \gamma \delta.
\]
If the noise level $\delta=0$, then ${\bf Co}(\vect \rho, \vect \xi) ={\bf In}(\vect \rho, \vect \xi) =0$. It means
$\supp(\vect \rho_\delta) \subset \Upsilon.$ Since $\Ac \vect \rho_\delta = \Ac \vect \rho$, we can use~(\ref{exact0}). 
Note that the inequality~(\ref{exact0}) 
becomes strict  if  
$\Upsilon$  does not contain collinear vectors. Thus, we conclude
 $\vect \rho_\delta = \vect \rho$.
\end{proof}

 \end{document}